\def\EE{{\mathbb{E}}}\def\PP{{\mathbb{P}}}
\def\setN{\mathbb{N}}
\newcommand{\fall}{\,\forall\,}
\newtheorem{corollary}{Corollary}
\newtheorem{proposition}{Proposition}
\newtheorem{lemma}{Lemma}
\newtheorem{theorem}{Theorem} 
\newtheorem{definition}{Definition}
\newfont{\mycrnotice}{ptmr8t at 7pt}
\newfont{\myconfname}{ptmri8t at 7pt}
\def\@copyrightspace{\relax}
\begin{document}
\title{The Behavior of Epidemics under Bounded Susceptibility
}

\numberofauthors{3} 

 \author{
 \alignauthor
 Subhashini Krishnasamy \\
        \affaddr{Department of ECE}\\
        \affaddr{UT Austin}\\
        \email{subhashini.kb@utexas.edu}
 \alignauthor
 Siddhartha Banerjee \\
        \affaddr{Department of MS\&E}\\
        \affaddr{Stanford University}\\
        \email{sidb@stanford.edu}
 \alignauthor Sanjay Shakkottai \\
        \affaddr{Department of ECE}\\
        \affaddr{UT Austin}\\
        \email{shakkott@austin.utexas.edu}
 }

\maketitle

\begin{abstract}
We investigate the sensitivity of epidemic behavior to a \emph{bounded susceptibility} constraint -- susceptible nodes are infected by their neighbors via the regular SI/SIS dynamics, but subject to a \emph{cap on the infection rate}. Such a constraint is motivated by modern social networks, wherein messages are broadcast to all neighbors, but attention spans are limited. Bounded susceptibility also arises in distributed computing applications with download bandwidth constraints, and in human epidemics under quarantine policies.

Network epidemics have been extensively studied in literature; prior work characterizes the graph structures required to ensure fast spreading under the SI dynamics, and long lifetime under the SIS dynamics. In particular, these conditions turn out to be meaningful for two classes of networks of practical relevance --  dense, uniform (i.e., \emph{clique-like}) graphs, and sparse, structured (i.e., \emph{star-like}) graphs. We show that bounded susceptibility has a surprising impact on epidemic behavior in these graph families. For the SI dynamics, bounded susceptibility has no effect on star-like networks, but dramatically alters the spreading time in clique-like networks. In contrast, for the SIS dynamics, clique-like networks are unaffected, but star-like networks exhibit a sharp change in extinction times under bounded susceptibility. 

Our findings are useful for the design of disease-resistant networks and infrastructure networks. More generally, they show that results for existing epidemic models are sensitive to modeling assumptions in non-intuitive ways, and suggest caution in directly using these as guidelines for real systems.
\end{abstract}

\keywords{Network Epidemics; SI/SIS Dynamics; Non-linear Dynamics; Bounded Susceptibility}

\section{Introduction}
\label{sec:intro}

Epidemic processes -- stochastic models for spreading in graphs -- play an important role in a variety of domains ranging from social networks (opinion spread), epidemiology (disease spread) and distributed computing (information spread). Though there are several models in literature for such processes, a characteristic feature of these is the spread of the process from `infected' to `non-infected' nodes along the edges of the network. In this work, we investigate the behavior of such processes under the assumption that \emph{the maximum susceptibility of a node to infection from its neighbors is bounded}. This constraint seems natural (and we present several settings where it is so) -- however we show that imposing it brings to light some surprising behavior of epidemic processes in different
networks.

In this paper, we focus on two major classes of epidemic processes: the $SI$ process where nodes never recover from infection, and the $SIS$ process where nodes recover, but can get reinfected. The $SI$ or `Susceptible-Infected' dynamics
\cite{Kesten03,BenPerTree,gopban11,BhamidiGNP} is the most common
model for one-way dissemination. Nodes exist in $2$ states --
`susceptible' ($S$) and `infected' ($I$). Infected nodes never recover, and try to infect each neighboring susceptible node at a rate $\beta$ (after a random delay, drawn from an i.i.d. $Exponential(\beta)$ distribution). We refer to $\beta$ as the \emph{virulence} of the infection process. The $SI$ dynamics terminate when all nodes are infected, and the corresponding time interval is called the \emph{spreading time}. In particular, we are interested in networks with small spreading time (ideally $O(1)$ in the number of nodes).

A small modification of this model gives the popular $SIS$ or
`Susceptible-Infected-Susceptible' dynamics
\cite{Liggett99,kepwhite91:viruses,satves02:scalefree,massganesh05,bbcs05,Banerjee2014a,BorgsAntidote}
for epidemic processes. This is identical to the $SI$ dynamics, except that now an infected node can revert to being susceptible at a (normalised) rate of `1'.
The absorbing state for the $SIS$ epidemic is when all nodes become susceptible; the time taken to reach this state is called the \emph{extinction time}. We are interested in networks where this extinction time is exponentially large (i.e., $\Omega(e^{poly(n)})$, where $n$ is the number of nodes in the network).

A key assumption in both models is that \emph{the infection rate for a susceptible node scales linearly in the size of the infected neighborhood}; formally, a susceptible node $j$ transitions to being infected at a rate $\beta|I_j|$, where $I_j$ is the set of infected neighbors of $j$. This assumption makes the models conceptually simple and tractable. However, it also means that in these models, high-degree nodes may get infected at unrealistically high rates. In several real-life settings, this is unreasonable; for example, in social networks, users broadcast content to all their neighbors, but attention-span constraints limit consumption of content to only a few neighbors. Similarly, in sensor networks using epidemic protocols, nodes can broadcast easily, but have download bandwidth constraints, and so can listen to only some of their neighbors. Refer to Section \ref{ssec:intro-appl} for more examples.

To modify epidemic dynamics to fit such settings, we need to assume that nodes have \emph{bounded susceptibility}, i.e., a maximum rate at which they can get infected, \emph{which is independent of network size}. We can capture this via the notion of an \emph{infection-profile}: a susceptible node $j$, we assume, is infected at a rate $\Phi(\beta|I_j|)$, where $\Phi(\cdot)$ is a \emph{bounded} function. Now, in order to use epidemic models to understand real-life processes, one hopes that the behavior of these epidemics is somewhat insensitive to the infection-profile $\Phi$. This motivates the central question we address in this paper:
\begin{center}
\em How robust is $SI$/$SIS$ epidemic behavior under an additional bounded susceptibility constraint?
\end{center}

The answer turns out to depend on the dynamics and network structure in a surprisingly subtle way. In particular, bounded susceptibility exposes a sharp dichotomy between two classes of networks of practical relevance:
\begin{itemize}[nolistsep,noitemsep]
\item \emph{Star-like networks,} i.e., sparse, structured graphs, with a small number of edges, mostly concentrated among a dense core, with all other nodes at a short distance from the core. These are good models for \emph{engineered networks}, which are designed to be highly connected while using only a small number of edges.
\item \emph{Clique-like networks,} i.e., dense, uniform graphs, with edges spread evenly among nodes. A good example is the connected-regime Erd\"{o}s-R\'{e}nyi graph (i.e., $G(n,p)$ with $p=\Omega(\log n/n)$). These are good models for \emph{organically-formed networks}.
\end{itemize} 
In prior work, the distinction between the two classes is suggested, but not explicitly made; we formalize this classification in Section \ref{ssec:model}. It turns out that this is crucial to our study of bounded susceptibility, as follows:

For a fixed virulence $\beta$, the spreading time in the $SI$ dynamics on star-like networks is small ($O(\log n)$, under fairly weak conditions on the non-core subgraph); in clique-like networks, the spreading time is $O(1)$\footnote{Special cases of this result can be found in \cite{BenPerTree,BhamidiGNP}}. In case of the $SIS$ epidemic, both clique-like networks and star-like networks experience exponential extinction times \cite{massganesh05}. Thus, in a sense, \emph{both star-like and clique-like networks are well suited for epidemic processes} -- this behavior, however, \emph{no longer holds under bounded susceptibility}.

\subsection{An Overview of our Results}
\label{ssec:contrib}

We refer to our modified epidemic models, incorporating bounded susceptibility, as the $SI_{bs}$ and $SIS_{bs}$ dynamics. As mentioned, we assume that in both models, a susceptible node $j$ gets infected at a rate $\Phi(\beta|I_j|)$, where $I_j$ are the infected neighbors of $j$; in the $SI_{bs}$ model, infected nodes remain so forever, while in the $SIS_{bs}$ model, they recover to become susceptible at a rate $1$. We assume that $\Phi(0)=0$, and $\Phi(\cdot)$ is concave, non-decreasing, and bounded above by some $\Phi_{\max}=\Theta(1)$.

\noindent\textbf{$SI_{bs}$ Dynamics:} In the $SI_{bs}$ model, we are interested in finding the (expected) spreading time, under the assumption that both $\beta$ and $\Phi_{\max}$ are $\Theta(1)$. In this context, we show:
\begin{itemize}[nolistsep,noitemsep]
\item For clique-like networks,  the spreading time switches from $O(1)$ under $SI$ dynamics to $\Omega(\log n)$ under the $SI_{bs}$ dynamics.
In particular, for a $G(n,p)$ graph in the connected regime (i.e., $p\sim\omega\left(\log n/n\right)$), the spreading time switches from $o(1)$ to $\Omega(\log n)$ under the $SI_{bs}$ dynamics.
\item For star-like networks, the spreading time is $\Theta(\log n)$ under both $SI$ and $SI_{bs}$ dynamics (under some assumptions on the network structure). Thus, imposing bounded susceptibility does not significantly alter the epidemic behavior.
\end{itemize}

Thus, the $SI$ model is highly sensitive to bounded susceptibility in clique-like networks -- note that the spreading time has switched from vanishing with network size to actually scaling with network size. On the other hand, it is insensitive in star-like networks.

A moment's thought suggests that this behavior is indeed reasonable -- we have greatly reduced the (potential) susceptibility of high-degree nodes, which should hinder the spread of the epidemic. Furthermore, since high-degree nodes are more affected, it seems natural that bounded susceptibility should adversely affect dense graphs much more than sparse graphs. Surprisingly, the latter intuition is completely reversed in case of the $SIS_{bs}$ model. 

\noindent\textbf{$SIS_{bs}$ Dynamics:} Recall that in the $SIS_{bs}$ model, we want to find the (expected) extinction time under $\beta,\Phi_{\max}=\Theta(1)$ and recovery rate $1$.
In this context, we show the following:
\begin{itemize}[nolistsep,noitemsep]
\item For clique-like networks, the extinction time is exponential (i.e., $\exp(\Omega(poly(n)))$) both with and without the bounded susceptibility assumption. In fact, this robustness property holds \emph{for any network which contains a polynomial-sized subgraph which resembles a random $G(n,p)$ graph in the connected regime} (i.e., with $p\sim\omega\left(\log n/n\right)$).
\item However for star-like networks, we now observe a sharp transition in extinction time. While star-like networks experience epidemics with exponential lifetimes under the SIS process \cite{massganesh05,bbcs05}, we prove that this is not true in the case of $SIS_{bs}$ infections. In particular, for any network which can be decomposed into a small \emph{core}, plus an outer sparse subgraph, \emph{the $SIS_{bs}$ epidemic lifetime is sub-exponential for any virulence $\beta=\Theta(1)$}.
\end{itemize}

This reversal of results may seem paradoxical -- there is however an intuitive explanation for this. Note that the absorbing states in the $SI$ and $SIS$ epidemics are exactly the opposite (all-infected vs. all-susceptible). Moreover, in the $SI$ model, we want the epidemic to move rapidly towards the all-infected state, while in the $SIS$, we want to deter it from reaching the all-susceptible state. Rapid transition to the all-infected state in the $SI$ dynamics is aided by having many high-degree nodes -- this is hampered by imposing bounded susceptibility. On the other hand in the $SIS$ dynamics, high degree nodes impede transition to the all-susceptible state -- having many such nodes helps distribute this effect, but too few high-degree nodes means that these are now critical bottlenecks. Under bounded susceptibility, these bottleneck nodes are the most affected, resulting in a sharp transition in the epidemic behavior.

To make this intuition more transparent, in Section \ref{ssec:example} we demonstrate the effect of the various dynamics in the clique and star networks, which are in a sense the extreme-case scenarios. Our main contribution in this work is to show how these phenomena extend to large classes of networks, occupying the space between these two extremes. In the process, we develop some structural lemmas relating the dynamics of epidemics to network properties, which may be of independent interest.

\paragraph*{Technical Contributions} 
Most of our arguments involve standard Markov chain machinery, along with careful use of measure concentration. Some of the ideas we use are of fairly recent origin, in particular, the sharp expressions for extinction time in SIS models via embedding in an ergodic Markov chain (Lemma \ref{lemma:absorption-time}, which we adapt from Lemma $8$ in \cite{Banerjee2014a}), and characterization of the $SI$ spreading time in clique-like graphs, which is similar to results for first-passage percolation in random graphs \cite{BhamidiGNP}. However, unlike \cite{BhamidiGNP}, our results (Theorems \ref{thm:SI-upperbnd},\ref{thm:SI-starlike}, \ref{thm:SIS-clique} and \ref{thm:SISbs-starlike}) do not focus on a particular graph or generative model, but rather, relate the epidemic behavior to structural properties of the network. We believe these lemmas could be useful in studying other related settings. Finally, our results follow from elementary arguments, and hold for finite networks, with explicit error bounds in terms of network size.

We note that our classification of networks into clique-like and star-like does not cover all networks. In particular, there are networks which are a mix of the two, and where our results may not be tight; sharper techniques could perhaps be used to bridge this gap. Furthermore, our techniques do not immediately extend to other epidemic models, in particular, the SIR dynamics (see \cite{DraiefMass,draief08} for details) -- this is an interesting avenue for future exploration.

\subsection{Bounded Susceptibility: Applications in\\ Real-World Settings}
\label{ssec:intro-appl}

Before we formally state our results, we first discuss three important real-life settings which exhibit bounded susceptibility, and point out applications of our results therein.

\paragraph*{Social and Economic Networks} 
Perhaps the most compelling example of bounded susceptibility in network epidemics comes from online social networks. Sites such as Twitter and Facebook are built on a broadcast model -- users upload content to the website, intending it to be broadcast to all neighbors. However, users are known to consume content from only a small set of neighbors -- this is due to a combination of limited attention-spans and content filtering algorithms, and has been observed empirically in Twitter data \cite{Hodas12,Weng12}. Epidemic models have been used to understand why content `goes viral' on such media, however existing works do not account for bounded susceptibility.

Non-linear propagation models are also used in economics to model network externalities in the spread of opinions, technologies, etc. For example, the adoption of a new technology on a network is often subject to diminishing returns due to wearing out of novelty with increasing number of recommendations \cite{Leskovec07} -- this naturally suggests a concave infection-profile, and further, bounded susceptibility.

\paragraph*{Human-Disease Epidemiology}
Network epidemic models are widely used to study human epidemics, and also computer viruses. Here again, bounded susceptibility is often a key feature of real epidemics, due to spread of awareness and adoption of preventive measures. Non-linear infection models have been used to capture the effect of risk perception in the spread of epidemics \cite{Funk09,Wu12}, by combining the effect of infection spread and the protection measures taken by the susceptible individual. However these works focus on approximate and mean-field treatments for the problem. 

Our work provides both intuition and technical tools for studying the effects of bounded susceptibility on epidemics. One important application of our findings is in the design of vaccination and quarantining policies on networks -- such policies help reduce the susceptibility of nodes to infection, but usually nodes retain some susceptibility. A closely related design question is of distributing antidotes in a network \cite{BorgsAntidote} -- there, the epidemic is controlled by increasing the recovery rate of a node, whereas our control is via limiting the susceptibility of a node.

\paragraph*{Distributed Computing on Networks}
A third application of epidemic processes is as a primitive for designing low-complexity distributed algorithms. The $SI$ dynamics is the basis of the \emph{flooding algorithm}, which is used for broadcast messaging in networks, distributed database synchronization, etc \cite{Eugster04}. The $SIS$ dynamics is sometimes used to ensure persistent storage in distributed storage/sensing applications with disk failures \cite{Chakrabarti07}. Existing work focuses on using the standard $SI/SIS$ dynamics, or modifies them to impose \emph{bounded influence}, i.e., a bound on the upload bandwidth of a node. 

However many of the above settings, in reality, exhibit bounded susceptibility. Modern distributed/P2P systems  are based on a virtual network architecture, where nodes upload content to a central location, from where it is downloaded by its neighbors. Thus, broadcasting is easy, but downloading is bandwidth constrained. This is also true for sensor networks -- a node needs to broadcast a message once, but it is limited in how many neighbors it can listen to, due to a fixed download bandwidth. Thus, bounded susceptibility is natural in these settings.
We discuss all these applications in Section \ref{ssec:disc}, after presenting our results.

\subsection{Related Work}
\label{ssec:relwork}

Epidemic processes on networks have been studied across many disciplines; readers interested in more details regarding epidemics are referred to several excellent books on the subject \cite{Kesten03,DraiefMass,andersonmay92:diseasesbook,Durrett07}. Specifically, there is a vast literature to characterize spreading time in various contexts for SI processes \cite{Kesten03,BenPerTree,gopban11,BhamidiGNP}, and phase transitions/extinction time for SIS processes \cite{kepwhite91:viruses,satves02:scalefree,massganesh05,bbcs05,Banerjee2014a,BorgsAntidote}. Phase transitions for SIR processes are available in \cite{draief08}.

Non-linear epidemic models have been explored in the past in various contexts. A large body of work is devoted to settings with \emph{bounded influence}, where the total rate at which a node infects its neighbors is bounded \cite{dshah:gossipbook,sanghajek07:gossip}, but the infection-profile $\Phi(\cdot)$ is still linear. One popular model with non-linear $\Phi(\cdot)$ is the Bootstrap Percolation, and related variants \cite{janson12,lelarge09}. Several authors have also studied non-linear infection models at the population level using mean field approximations \cite{Funk09, Wu12} -- these works show the existence of epidemic thresholds under different non-linear models. However, to the best of our knowledge, ours is the first work that explicitly characterizes epidemic behavior under bounded susceptibility.

\section{Main Results and Discussion}
\label{sec:results}

We now present and discuss our main results. Our general theorems are presented in Section \ref{ssec:results}, followed by a discussion of their application in Section \ref{ssec:disc}; before that, in Section \ref{ssec:example}, we specialize our results for the clique and star networks, and give an outline of the proofs for these cases. The complete proofs are somewhat technical -- we outline the main proof ideas after each theorem, deferring complete proofs to Section \ref{sec:proofs}.

\subsection{System Model}
\label{ssec:model}

%

\subsubsection*{The $SI$/$SIS$ Dynamics}

The $SI, SIS$ dynamics models we use are standard in literature \cite{massganesh05,Banerjee2014a}. We consider a graph $G(V,E)$ with $n$ nodes (vertices). The various processes $(SI, SI_{bs}, SIS, SIS_{bs})$ evolve on this graph in continuous time. Associated with each node $i \in \{1, 2, \ldots, n\}$ is a random process $X_i(t) \in \{0, 1\}.$ Here, `0' corresponds to the susceptible state and `1' corresponds to the infected state (the entire state vector is denoted by $\mathbf{X}(t)$).
For the $SI$ and $SIS$ models, nodes in state `1' (infected nodes) attempt to infect a neighbor \emph{independently} at a rate $\beta$, i.e., after a i.i.d random interval drawn from an $Exponential(\beta)$ distribution. In the $SI$ dynamics, an infected node never recovers. In contrast in the $SIS$ dynamics, an infected node returns back to the susceptible state at rate $1$ (i.e., after an $Exponential(1)$ time).

We assume that the infection starts off at a single node, arbitrarily chosen. This is natural for the $SI$ epidemic since we want to study spreading-time. For the $SIS$ epidemic, our results generalize for any constant-sized initial infected set. For the $SI$ dynamics, the only absorbing state is the \emph{all-infected} state (i.e., $X_i = 1\,\fall i$), and the spreading-time is defined as $\tau_s\triangleq\inf_t\{X_i(t)=1\,\fall i\}$. In contrast, for the $SIS$ dynamics, the only absorbing state is the \emph{all-susceptible} state (i.e., $X_i = 0\,\fall i$), and the extinction-time is defined as $\tau_e\triangleq\inf_t\{X_i(t)=0\,\fall i\}$. We also assume throughout that $\beta$ is a constant, independent of $n$; however, our analysis extends for settings where $\beta$ is a function of $n$.


\subsubsection*{Infection-Profiles and Bounded Susceptibility}

A critical assumption in both $SI$ and $SIS$ models is that the infection rate for a susceptible node scales in proportion to the number of infected neighbors. In order to study the effect of non-linearity in the infection-rate, we introduce the notion of an infection-profile: we assume that a susceptible node $j$ is infected at a rate $\Phi(\beta|I_j|)$, where $I_j$ is the number of infected neighbors of node $j$ (at some time $t$; we suppress the dependence on $t$ for ease of notation). Thus, in both the $SI_{bs}$ and the $SIS_{bs}$ dynamics, we have:
\begin{align*}
 & X_i: 0 \rightarrow 1 \quad \text{at rate } \Phi\left(\beta \sum_{(i,j) \in E} X_j\right),
\end{align*}
and in the $SIS_{bs}$, we additionally have $X_i: 1 \rightarrow 0 \text{ at rate }1$.

$\Phi: \mathbb{R_+} \rightarrow \mathbb{R_+}$ represents the overall effect that the infected neighbors together have on a susceptible node. We assume that $\Phi(0) = 0$, and that $\Phi(\cdot)$ is non-decreasing, concave and bounded. Note that these assumptions imply that $\Phi_{max}:= \sup_{x\in \mathbb{R}}\Phi(x)$ exists, and $\Phi(\beta)>0\fall\beta>0$.


\subsubsection*{Clique-Like and Star-Like Networks}

We now formalize the two classes of networks, which we introduced in Section \ref{sec:intro}. Intuitively, clique-like networks are those which are dense (i.e., where $|E|=\omega(n)$) and have edges evenly distributed across the cuts of the network. One way to formalize this is via the expansion properties of the underlying graph. There are different ways to quantify the expansion of a graph. For example, the \emph{edge expansion} of a graph is characterized by the \textit{generalized isoperimetric constant}:
\begin{align*}
\eta_m^e(G) := \inf_{S \subset V, |S|\leq m}\frac{|\delta(S)|}{|S|}, \quad 0 < m \leq \lfloor n/2 \rfloor,
\end{align*}
where, for any set $A\subseteq V$, $\delta(A)=\{(i,j)\in E \,\text{ s.t. }\, i\in A,j\notin A\}$ are the edges in the cut defined by $A$. The case of $m = n/2$ is referred to as the \textit{isoperimetric/Cheeger constant}:
\begin{align*}
\eta(G) := \inf_{S \subset V, |S|\leq n/2}\frac{|\delta(S)|}{|S|}.
\end{align*} 
We can also define an alternate notion of isoperimetry based on \textit{vertex expansion}, as follows:
\begin{align*}
\eta^v_m(G) := \inf_{S \subset V, |S|\leq m}\frac{|\Gamma(S)|}{|S|}, \quad 0 < m \leq \lfloor n/2 \rfloor,
\end{align*}
where $\Gamma(S)$ is the \emph{neighborhood} of $S$, i.e., the set of nodes in $S^C$ with at least one neighbor in $S$. It is easy to check that for any fixed $m$, we have $\eta^e_m(G)\geq\eta^v_m(G)$. 
 
The isoperimetric constants characterize bottleneck sets, i.e., those containing the smallest fraction of the potential edges in the cut. For a clique, it is easy to check that $\eta(G)=\Theta(n)$, which is the highest possible. To show that epidemic properties hold over a large class of graphs, we want to admit graphs with smaller isoperimetric constants. This motivates the following characterizations for clique-like networks:
\begin{definition}[Clique-like Networks]
\label{def:clique-like}
We define a graph $G$ to be clique-like if it satisfies one of the following:
\begin{itemize}[nolistsep,noitemsep]
\item[(A)] If $\eta(G)=\Omega(\log n)$.
\item[(B)] If for some $m = n^\alpha,$ $\alpha \in (0,1)$, we have $\eta_m^v(G) = \omega(1)$.
\end{itemize}
\end{definition}
The two definitions arise from analyzing the two epidemic dynamics on graphs. Although not equivalent, they both characterize networks with good expansion properties. To demonstrate their wide applicability, we show that random (\emph{Erd\"{o}s-R\'{e}nyi}) graphs in the connected regime (i.e., $G(n,p)$ with $p\sim\Omega\left(\log n/n\right)$) are clique-like under both definitions.

On the other hand, star-like networks intuitively are those which can be partitioned into a \emph{dense core} and a \emph{sparse periphery}. Here, the natural notions of density turn out to be the average degree $d_{avg}(G)$, and the spectral radius $\rho(G)$, i.e., the largest eigenvalue of the adjacency matrix, which is closely related to the node degrees (in particular, $d_{avg}(G)\leq\rho(G)\leq d_{\max}(G)$. Formally, we write $V=V_c\cup V_p$, where $V_c$ denotes the core and $V_p$ the periphery. As for clique-like networks, we have two characterizations:
\begin{definition}[Star-like Networks]
\label{def:star-like}
We define a\\ graph $G$ to be star-like if it satisfies one of the following:
\begin{itemize}[nolistsep,noitemsep]
\item[(A)] If $|V_p|= \Omega(n)$ and the average degree of nodes in the periphery is $O(1).$ In addition, the diameter of the graph is $O(\log n).$
\item[(B)] If $|V_c|=O(\mbox{poly}\log n)$ and the subgraph $G(V_p)$ induced by the periphery has spectral radius $\rho(G(V_p))=o(1).$
\end{itemize}
\end{definition}
Again, the class of graphs that are star-like by the two definitions above are not equivalent but broadly include networks which have a sparse periphery and a small (possibly dense) core. We derive our results for SI dynamics based on definition $(A)$ of clique-like and star-like networks, while the results for SIS dynamics are based on definition $(B)$ for both the networks.

\subsection{Two Examples: The Clique and the Star}
\label{ssec:example}

Before presenting our full results, we first illustrate them by considering our two representative graphs -- the clique and the star. This comparison brings out the dichotomy between the two network classes. Further, it allows us to build intuition for our main results, since these special cases admit simple proofs. In all these results, we use standard notation ($O$, $\Theta$, $\Omega$, $o$ and $\omega$) to characterize the scaling behavior with the graph size, $n$.  We also use the notation $[n]\triangleq\{1,2,\ldots,n\}$, and $\mathcal{H}_n=\sum_{i=1}^n\frac{1}{i}$.

\subsubsection*{$SI-SI_{bs}$ Dynamics}

Consider first the $SI/SI_{bs}$ epidemic on $K_n$, the clique on $n$ nodes. We now show that the spreading-time is \emph{vanishingly small} under the $SI$ dynamics, while it \emph{scales with $n$} under the $SI_{bs}$ dynamics.

\begin{proposition}
\label{prop:si-clique}
For the $SI/SI_{bs}$ epidemic on $K_n$ starting at an arbitrary node,
\begin{enumerate}[nolistsep,noitemsep]
\item Under the SI dynamics, $\EE[\tau_s] = \Theta(\frac{\log n}{n})$.
\item Under the $SI_{bs}$ dynamics $\EE[\tau_s] = \Theta(\log n)$.
\end{enumerate}
\end{proposition}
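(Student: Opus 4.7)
The plan is to exploit the vertex-transitivity of $K_n$ to collapse both dynamics into one-dimensional pure-birth Markov chains on the number $k$ of infected vertices, starting at $k=1$ (the choice of initial vertex is immaterial by symmetry) and absorbed at $k=n$. For such a chain, the expected absorption time is
\[
\EE[\tau_s] \;=\; \sum_{k=1}^{n-1}\frac{1}{r_k},
\]
where $r_k$ is the total instantaneous infection rate out of state $k$; this follows from the memoryless/superposition properties of independent exponentials, which say that the holding time in state $k$ is $\mathrm{Exp}(r_k)$. All the remaining work is to compute $r_k$ in each model and evaluate a harmonic-type sum.

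For part~1 ($SI$), each of the $n-k$ susceptibles has exactly $k$ infected neighbours, so $r_k = \beta k(n-k)$. Applying the partial-fraction identity $\tfrac{1}{k(n-k)} = \tfrac{1}{n}\bigl(\tfrac{1}{k} + \tfrac{1}{n-k}\bigr)$ reduces the sum to $\tfrac{2\mathcal{H}_{n-1}}{\beta n} = \Theta(\log n / n)$, as required.

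For part~2 ($SI_{bs}$), the rate becomes $r_k = (n-k)\,\Phi(\beta k)$. The key observation is that the standing hypotheses on $\Phi$ (monotone non-decreasing, $\Phi(0)=0$, bounded by $\Phi_{\max}=\Theta(1)$, and $\Phi(\beta)>0$ for $\beta>0$) pin $\Phi(\beta k)$ between the two \emph{constants} $\Phi(\beta)$ and $\Phi_{\max}$ uniformly for all $k\geq 1$. Substituting these constant upper and lower bounds into the sum strips the $\Phi(\beta k)$ factor out entirely and leaves the coupon-collector-type sum $\sum_{k=1}^{n-1}\tfrac{1}{n-k} = \mathcal{H}_{n-1}$, yielding $\EE[\tau_s] = \Theta(\log n)$.

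I do not anticipate a genuine obstacle here: symmetry trivializes the state space, and both parts reduce to elementary sums. The only conceptual content worth highlighting when writing up the argument is the single observation that bounded susceptibility removes precisely the factor of $k$ in the per-vertex infection rate that was responsible for the extra factor of $n$ in the unbounded $SI$ analysis. What remains after dropping that factor is the coupon-collector slowdown $\sum 1/(n-k)$, which is why $\log n$ appears in both parts but without the $1/n$ prefactor in the bounded case.
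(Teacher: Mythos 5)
Your proposal is correct and follows essentially the same route as the paper: both reduce the dynamics to the one-dimensional birth chain on the number of infected nodes, write $\EE[\tau_s]$ as the sum of reciprocal transition rates, and in the bounded case sandwich $\Phi(\beta k)$ between the constants $\Phi(\beta)$ and $\Phi_{\max}$ to extract the harmonic sum. The only cosmetic difference is that you carry out the partial-fraction computation for the unbounded $SI$ case explicitly, whereas the paper cites it as known.
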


\begin{proof}
The first part is known from previous results \cite{BhamidiGNP} and follows from standard Markov-chain arguments. By denoting $N(t)$ to be the  number of infected nodes at time $t,$ and analyzing the steady-state of the resulting (one-dimensional) CTMC, we immediately get the result.

Next, for the bounded-susceptibility spreading process, $SI_{bs}$, we again consider the process $\{N(t)\}$, i.e., the total number of infected nodes. $\{N(t)\}$ has transition rates as follows:
\begin{align*}
& z \rightarrow z+1 \quad \text{at rate } \Phi(\beta z)(n-z) \quad \fall 1 \leq z \leq n-1,
\end{align*}
and hence the expected spreading-time is given by:
\begin{align*}
\EE [\tau_s] &= \sum_{k=1}^{n-1} \frac{1}{\Phi(\beta k)(n-k)}\geq \sum_{k=1}^{n-1} \frac{1}{(n-k) \Phi_{max}} = \frac{\mathcal{H}_{n-1}}{\Phi_{max}},
\end{align*}
where the inequality follows since $\Phi(\beta k) \leq \Phi_{max}$. From the same equation, we also have:
\begin{align*}
\EE [\tau_s] & = \sum_{k=1}^{n-1} \frac{1}{\Phi(\beta k)(n-k)}\leq \frac{\mathcal{H}_{n-1}}{\Phi(\beta)},
\end{align*}
since $\Phi$ is a non-decreasing function. Finally, since both $\Phi(\beta)$ and $\Phi_{\max}$ are $\Theta(1)$, and using $\ln(n+1) \leq\mathcal{H}_n\leq \ln(n)+1$, we get the second assertion.
\end{proof}

Thus there is a sharp change in the spreading-time between the $SI$ and $SI_{bs}$ epidemic processes in cliques. This is clearly visible in Figure~$1$, where we simulate and plot the spreading-time under the $SI$ and $SI_{bs}$ dynamics on a clique, as a function of network size $n$. Further, in Theorem \ref{appthm:si-lower} in the Appendix, we derive concentration bounds showing that the above result holds with high probability.

\begin{figure}[h]
\label{fig:si-clique}
\begin{center}
\includegraphics[width=\columnwidth]{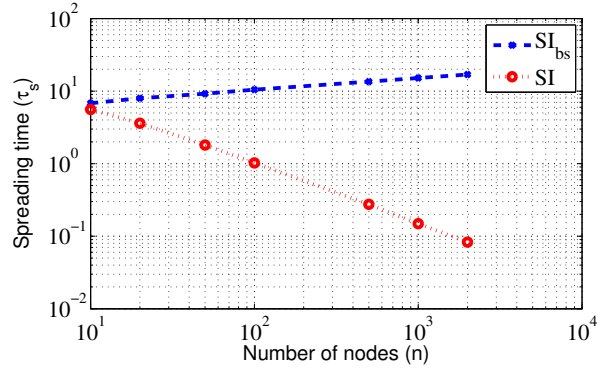}
\caption{Simulated epidemic spreading-time in a clique under $SI$ and $SI_{bs}$, averaged over $500$ runs for $\beta = 0.1$ and $\Phi_{max} = 0.5$. Note that $\EE[\tau_s]$ switches from $o(1)$ under the $SI$ dynamics, to $\Omega(\log n)$ under $SI_{bs}$ (decreasing to increasing on $\log-\log$ plot).}
\end{center}
\end{figure}

On the other hand, the same behavior is not seen in a star network, as we show via our next result:
\begin{proposition}
\label{prop:si-star}
In a star network with a single hub and $n-1$ leaves, under both the $SI$ and $SI_{bs}$ epidemics, we have:
\begin{align*}
\EE[\tau_s]=\Theta(\log n)
\end{align*} 
\end{proposition}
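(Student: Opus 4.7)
The plan is to decompose the spreading process in the star into two phases and show both run in $\Theta(\log n)$ (or less) expected time, by exploiting the crucial structural fact that every leaf has exactly one neighbor, namely the hub.

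First, I would handle initialization. If the infection starts at the hub, skip to phase two. If it starts at a leaf $v$, then only the hub can currently be infected, and since $v$ is the hub's only infected neighbor the infection rate into the hub is $\beta$ under $SI$ and $\Phi(\beta)$ under $SI_{bs}$. By the assumptions on $\Phi$ (non-decreasing with $\Phi(x)>0$ for $x>0$, and $\beta=\Theta(1)$), both rates are $\Theta(1)$, so the expected waiting time for the hub to become infected is $O(1)$, which is absorbed into the final $\Theta(\log n)$ bound.

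Second, once the hub is infected I would analyze the remaining spread. Here is the key observation: each of the (at most $n-1$) still-susceptible leaves has exactly one neighbor in the graph, namely the hub, so its number of infected neighbors is exactly $1$. Therefore under $SI$ each leaf transitions $0\to 1$ at rate $\beta$, and under $SI_{bs}$ at rate $\Phi(\beta)$, independently of what happens at the other leaves (they are not connected to each other). Call this common rate $r$; by the standing assumption $\Phi_{\max}=\Theta(1)$ and $\Phi(\beta)>0$, we have $r=\Theta(1)$ in both models. Therefore the remaining spreading time is distributed as the maximum of $n-1$ i.i.d.\ $\mathrm{Exponential}(r)$ random variables, whose expectation equals $\mathcal{H}_{n-1}/r$.

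Finally, I would combine the two phases and use $\ln(n+1)\le\mathcal{H}_n\le\ln n+1$ to conclude
\[
\EE[\tau_s] \;=\; O(1) + \frac{\mathcal{H}_{n-1}}{r} \;=\; \Theta(\log n),
\]
matching the lower bound (since $\mathcal{H}_{n-1}/r$ is itself $\Theta(\log n)$, no amount of initial head-start can reduce the asymptotic order). There is essentially no obstacle here; the only subtle point is to note that bounded susceptibility is toothless in a star because every leaf's neighborhood has size one, so $\Phi$ is evaluated at the single point $\beta$ and simply rescales the exponential rate by a $\Theta(1)$ factor.
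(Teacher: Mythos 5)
Your proposal is correct and follows essentially the same route as the paper's proof: split $\tau_s$ into the $O(1)$ time for the hub to become infected plus the maximum of the remaining leaves' i.i.d.\ exponential infection times, and observe that since no leaf ever has more than one infected neighbor, $SI_{bs}$ on the star is just $SI$ with rate $\Phi(\beta)=\Theta(1)$. The only (immaterial) difference is that the paper is slightly more careful in writing the number of remaining leaves as $m\in\{n-2,n-1\}$ depending on whether the seed is the hub or a leaf.
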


\begin{proof}
Consider first the $SI$ process on a star network. We denote the hub as node $1$ and define: 
\begin{align*}
\tau_1=\inf_{t>0}\{\mbox{Node 1 is infected}\},
\end{align*}
i.e., the first time that the hub gets infected. Since the epidemic starts at a single (arbitrary) node, we can split the spreading-time as: $\tau_s=\tau_{1}+(\tau_s-\tau_1)$. Next, we have that $\EE[\tau_1]=O(1)$ and $\tau_s-\tau_1$ corresponds to the maximum of $m\in\{n-2,n-1\}$ i.i.d random variables, each drawn from a $Exponential(\beta)$ distribution; thus $\EE[\tau_s-\tau_1]=\frac{\mathcal{H}_m}{\beta}\sim\Theta(\log n)$. Finally, note that no susceptible node ever has more than $1$ infected neighbor -- hence the $SI_{bs}$ model is identical to the $SI$ model with $\beta'=\Phi(\beta)=\Theta(1)$. Therefore, the result also holds for $SI_{bs}$ epidemics.
\end{proof}

\subsubsection*{$SIS-SIS_{bs}$ Dynamics}

Next, we consider the $SIS/SIS_{bs}$ epidemic. Here, it is known \cite{massganesh05} that the expected extinction-time is exponential in cliques -- we will prove that it is exponential even for the $SIS_{bs}$ dynamics.

We first state and prove a lemma that gives a closed-form expression for expected time to absorption in a birth-death Markov chain. The lemma is adapted from \cite[Lemma 8]{Banerjee2014a} -- it re-obtains the result for $SIS$ epidemic on cliques, and also extends to our new result for the $SIS_{bs}$ epidemic.

Consider a birth-death CTMC $\{U(t)\}$ on $\{0,1,\ldots,n\}$, with the following transition rates:
\begin{align*}
& i \rightarrow i+1 \quad \text{at rate } q_{i,i+1} \quad \fall 1 \leq i \leq n-1,	\\
& i \rightarrow i-1 \quad \text{at rate }  q_{i,i-1} \quad \fall 1 \leq i \leq n.
\end{align*}
Note that $0$ is the only absorbing state for the above process. Let $T(i,j) := inf\{t: U(t) = j \| U(0) = i\}$. Then we have:
\begin{lemma}[Lemma $8$ from \cite{Banerjee2014a}]
\label{lemma:absorption-time}
\begin{align*}
\EE[T(1,0)] = \frac{1}{q_{1,0}}\left(1 + \sum_{k=2}^n \prod_{i=1}^{k-1} \frac{q_{i,i+1}}{q_{i+1,i}}\right).
\end{align*}
\end{lemma}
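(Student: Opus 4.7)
The plan is to perform first-step analysis on the birth-death chain, reducing $T(1,0)$ to a sum of one-step-down hitting times and then unrolling the resulting recursion. Set $h_i := \EE[T(i,i-1)]$ for $1 \le i \le n$; by the strong Markov property $\EE[T(1,0)] = h_1$, so it suffices to obtain a closed form for $h_1$.

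For $1 \le i \le n-1$, I would condition on the first jump out of state $i$. The holding time is $\mathrm{Exp}(q_{i,i-1}+q_{i,i+1})$, the jump goes down with probability $q_{i,i-1}/(q_{i,i-1}+q_{i,i+1})$, and on the complementary event the chain lands at $i+1$ and, by strong Markov, must take an additional independent $T(i+1,i)+T(i,i-1)$ amount of time to reach $i-1$. Taking expectations yields
\[
h_i = \frac{1}{q_{i,i-1}+q_{i,i+1}} + \frac{q_{i,i+1}}{q_{i,i-1}+q_{i,i+1}}\bigl(h_{i+1}+h_i\bigr),
\]
which simplifies to $h_i = \tfrac{1}{q_{i,i-1}} + \tfrac{q_{i,i+1}}{q_{i,i-1}}\,h_{i+1}$. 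At the boundary, state $n$ has only a down-transition, giving $h_n = 1/q_{n,n-1}$; this is consistent with applying the same recursion with the convention $q_{n,n+1}=0$.

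Next I would unroll the recursion starting from $h_1$; each substitution peels off one summand and pushes the residual $h_k$ further out, leading to
\[
h_1 \;=\; \frac{1}{q_{1,0}} \,+\, \sum_{k=2}^{n} \frac{1}{q_{k,k-1}} \prod_{i=1}^{k-1} \frac{q_{i,i+1}}{q_{i,i-1}}.
\]
To match the form in the statement, I would then perform a routine index shift: for each $k \ge 2$, noting that $q_{1,0}\,q_{2,1}\cdots q_{k,k-1} = q_{1,0}\prod_{i=1}^{k-1}q_{i+1,i}$ rewrites the $k$-th summand as $\tfrac{1}{q_{1,0}}\prod_{i=1}^{k-1}\tfrac{q_{i,i+1}}{q_{i+1,i}}$; factoring $1/q_{1,0}$ out of every term then yields the claimed expression.

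There is no real obstacle in this argument -- every step is finite algebra. The only mildly error-prone point is the final index shift that converts the $q_{i,i-1}$ denominators naturally produced by the recursion into the $q_{i+1,i}$ denominators appearing in the stated formula. If one prefers to avoid explicit unrolling, an equally short alternative is induction on $n$: the base case $n=1$ gives $h_1 = 1/q_{1,0}$, and the inductive step follows from $h_1 = 1/q_{1,0} + (q_{1,2}/q_{1,0})\,h_2$ together with the induction hypothesis applied to the birth-death chain on $\{1,\ldots,n\}$ re-indexed to start at $0$.
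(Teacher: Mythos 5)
Your proof is correct, but it takes a genuinely different route from the paper's. You do a direct first-step analysis: set $h_i = \EE[T(i,i-1)]$, derive the recursion $h_i = \tfrac{1}{q_{i,i-1}} + \tfrac{q_{i,i+1}}{q_{i,i-1}} h_{i+1}$ with $h_n = 1/q_{n,n-1}$, unroll, and re-index. I checked the algebra, including the final index shift $q_{k,k-1}\prod_{i=1}^{k-1}q_{i,i-1} = q_{1,0}\prod_{i=1}^{k-1}q_{i+1,i}$, and it is right. The paper instead embeds the absorbing chain in an ergodic CTMC by adding a transition $0 \to 1$ at some rate $q_{0,1}$, notes that $T(1,0)$ and the corresponding hitting time in the modified chain have the same law, solves the detailed-balance equations for the stationary probability $\pi_0$, and then extracts $\EE[T(1,0)]$ from the return-time identity $\EE[\tilde{T}(0,0)] = \frac{1/q_{0,1}}{\pi_0}$ combined with $\EE[\tilde{T}(0,0)] = 1/q_{0,1} + \EE[\tilde{T}(1,0)]$. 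Your approach is more elementary and self-contained (no appeal to ergodic Markov chain theory), at the cost of the recursion-unrolling bookkeeping; the paper's ergodic-embedding trick is the one it highlights as a reusable technique from \cite{Banerjee2014a} and avoids the recursion entirely. One minor point worth a sentence in a polished write-up: the rearrangement that isolates $h_i$ in your first-step equation implicitly uses $h_i < \infty$, which holds here because the chain is finite-state with $q_{i,i-1} > 0$ for all $i$, so absorption at $0$ has finite expected time from every state.
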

\noindent For the sake of completeness, we provide a proof of this result in
the Appendix. We now use this to get lower bounds on the expected extinction-time in cliques.
\begin{proposition}
\label{prop:sis-clique}
For the $SIS/SIS_{bs}$ epidemic on clique $K_n$ starting at an arbitrary node, under both the $SIS$ and $SIS_{bs}$ dynamics, $\log (\EE[\tau_e]) = \Omega(n).$
\end{proposition}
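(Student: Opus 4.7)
The plan is to realize the number of infected nodes $N(t)$ on the clique as a birth--death CTMC on $\{0,1,\ldots,n\}$, and then apply Lemma~\ref{lemma:absorption-time} to obtain a sum-of-products expression for $\EE[\tau_e]$. By the symmetry of $K_n$ (every susceptible node has the same number of infected neighbors, namely $N(t)$), the total birth and death rates depend only on $N(t)$, giving
\begin{align*}
q_{i,i+1} = \Phi(\beta i)(n-i), \qquad q_{i,i-1} = i,
\end{align*}
for the $SIS_{bs}$ chain (with $\Phi(x)=x$ recovering the standard $SIS$ case). In particular $q_{1,0}=1$, so Lemma~\ref{lemma:absorption-time} implies that for \emph{any} $k\in\{2,\ldots,n\}$,
\begin{align*}
\EE[\tau_e] \;\geq\; \prod_{i=1}^{k-1}\frac{q_{i,i+1}}{q_{i+1,i}} \;=\; \prod_{i=1}^{k-1}\frac{\Phi(\beta i)(n-i)}{i+1}.
\end{align*}

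The proof then reduces to choosing $k=\Theta(n)$ so that each factor in this product is bounded away from $1$. For the standard $SIS$ dynamics this is immediate: taking $k=\lfloor n/2\rfloor$, every factor $\beta i(n-i)/(i+1)\geq \beta(n-i)/2$ is $\Omega(n)$, giving a super-exponential lower bound. For the $SIS_{bs}$ case I would use the monotonicity of $\Phi$: since $\Phi(\beta i)\geq \Phi(\beta)>0$ for $i\geq 1$, each factor dominates $f(i):=\Phi(\beta)(n-i)/(i+1)$, which is decreasing in $i$ and equals some fixed constant $c>1$ when $i=i^\star=\Theta(n)$. Concretely, picking the largest $k$ for which $f(k)\geq 2$ yields $k=\Theta(n)$ and
\begin{align*}
\EE[\tau_e] \;\geq\; \prod_{i=1}^{k-1} f(i) \;\geq\; 2^{k-1} \;=\; 2^{\Omega(n)},
\end{align*}
so $\log \EE[\tau_e]=\Omega(n)$ as claimed.

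The only place any subtlety enters is in the $SIS_{bs}$ argument when $\Phi(\beta)$ happens to be a small positive constant; naively taking $k=n/2$ could give a product that fails to grow if $\Phi(\beta)<1/2$. The key observation is that we only need the factors to exceed $1$, and this is guaranteed for all $i\lesssim \tfrac{\Phi(\beta)}{\Phi(\beta)+1}\,n$. Thus $k$ must be chosen as a constant fraction of $n$ determined by $\Phi(\beta)$, rather than as $n/2$; beyond that, the proof is a short calculation. I would finish by explicitly recording the choice $c=\Phi(\beta)/(\Phi(\beta)+2)$ (or similar) so that $f(cn)\geq 2$, making the $\Omega(n)$ lower bound on $\log\EE[\tau_e]$ transparent.
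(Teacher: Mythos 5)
Your proof is correct, and it rests on the same skeleton as the paper's: reduce to the one-dimensional birth--death chain for the number of infected nodes on $K_n$ (valid by symmetry) and invoke Lemma~\ref{lemma:absorption-time}. Where you diverge is in how the resulting expression is estimated. The paper keeps the \emph{entire} sum $\sum_{k=2}^{n}\prod_{i=1}^{k-1}\Phi(\beta i)(n-i)/(i+1)$, uses concavity of $\Phi$ together with $\Phi(0)=0$ to write $\Phi(\beta i)\geq \tfrac{i}{k_0}\tfrac{\Phi_{max}}{2}$ for $i\leq k_0$ (with $k_0$ the first index where $\Phi$ reaches half its cap), and then recognizes a binomial expansion to obtain a bound of order $(1+\Phi_{max}/2)^n$. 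You instead discard all but a single term of the sum, bound $\Phi(\beta i)\geq\Phi(\beta)>0$ by monotonicity alone, and truncate the product at the largest $k$ with $\Phi(\beta)(n-k)/(k+1)\geq 2$, which is a $\Phi(\beta)$-dependent constant fraction of $n$. Your route is more elementary --- it needs neither the concavity estimate near the origin nor the binomial identity --- and it correctly isolates the one genuine pitfall (that a naive choice $k=n/2$ fails when $\Phi(\beta)<1/2$, so the truncation point must depend on $\Phi(\beta)$). What the paper's version buys in exchange is a sharper constant in the exponent, $\log(1+\Phi_{max}/2)$ rather than roughly $\tfrac{\Phi(\beta)}{\Phi(\beta)+2}\log 2$; both yield the claimed $\log\EE[\tau_e]=\Omega(n)$.
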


\begin{proof}
We prove the statement for the $SIS_{bs}$ dynamics; the claim for the $SIS$ dynamics follows similarly (and is also known from previous work \cite{massganesh05}). As in Proposition~\ref{prop:si-clique}, it is sufficient to consider the total number of infected nodes $N(t)$ at time $t$. $\{N(t)\}$ has transition rates as follows:
\begin{align*}
& z \rightarrow z+1 \quad \text{at rate } \Phi(\beta z)(n-z) \quad \fall 1 \leq z \leq n-1,	\\
& z \rightarrow z-1 \quad \text{at rate } z \quad \fall 1 \leq z \leq n.
\end{align*}
Let $k_0 = \inf\{k \in \mathbb{N} : \Phi(\beta k) \geq \Phi_{max}/2\}.$ Since $\Phi$ is concave and $\Phi(0) = 0$, we have $\Phi(\beta k) \geq \frac{k}{k_0} \Phi(\beta k_0) \geq \frac{k}{k_0} \frac{\Phi_{max}}{2} \quad \forall k \leq k_0.$ Note that $k_0=\Theta(1)$. Now, from Lemma~\ref{lemma:absorption-time}, we have:
\begin{align*}
\EE[\tau_e] =1 &+ \sum_{k=2}^{n} \prod_{i=1}^{k-1} \frac{\Phi(\beta i)(n-i)}{i+1}	\\
\geq 1 &+ \sum_{k=2}^{k_0} \prod_{i=1}^{k-1}\frac{\Phi_{max}}{2k_0}\frac{i(n-i)}{i+1}\\
& + \sum_{k=k_0+1}^{n} \prod_{i=1}^{k_0-1}\frac{\Phi_{max}}{2k_0} \frac{i(n-i)}{i+1}\prod_{j=k_0}^{k-1} \frac{\Phi_{max}}{2} \frac{(n-j)}{j+1}	\\
= 1 &+\sum_{k=2}^{k_0}\frac{(k-1)!}{n}\binom{n}{k}\left(\frac{\Phi_{max}}{2k_0}\right)^{k-1}\\
& +\frac{k_0!}{nk_0^{k_0}}\sum_{k=k_0+1}^{n}\binom{n}{k}\left(\frac{\Phi_{max}}{2}\right)^{k-1}\\
\geq & \frac{2k_0!}{n\Phi_{max}k_0^{k_0}}\sum_{k=1}^{n} \binom{n}{k}  \left(\frac{\Phi_{max}}{2}\right)^{k}	\\
= &\frac{2k_0!}{n\Phi_{max}k_0^{k_0}} \left(\left(1+ \frac{\Phi_{max}}{2}\right)^n - 1\right).
\end{align*}
Since $\Phi_{max},k_0$ are $\Theta(1)$, $\log\left(\EE[\tau_e]\right) = \Omega(n).$
\end{proof}

The more interesting phenomenon in the $SIS$ model occurs in the star network. As in the clique,it is known \cite{massganesh05,bbcs05} that the expected extinction-time in star networks is exponential. However, under the $SIS_{bs}$ dynamics, we show that it is sub-exponential. More precisely, we show that the epidemic survives in a star network for a time that is \emph{at most polynomial in $n$}.

\begin{proposition}
\label{prop:sis-star}
For an $SIS_{bs}$ epidemic originating at any node in a star graph, $\EE[\tau_e] = O(n^{\Phi_{max}+1}).$
\end{proposition}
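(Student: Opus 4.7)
In the star, the hub is the unique conduit for infection spread: no leaf can ever become infected unless the hub is currently infected. The plan is to run a geometric series of ``attempts'' at extinction indexed by the successive times $R_0 < R_1 < \cdots$ at which the hub transitions from infected to susceptible (with $R_0 = 0$ if the hub starts susceptible). By the strong Markov property, at each $R_k$ the process restarts from state $(H{=}0, N_k)$ with $0 \leq N_k \leq n-1$ leaves infected. I will show that (i) at each $R_k$, the probability of reaching the all-susceptible state within the next $T := \log(2n)$ time units is at least $p := \tfrac{1}{2}(2n)^{-\Phi_{max}}$; and (ii) the expected length of each renewal cycle is $O(\log n)$. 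A geometric-trial argument then yields $\EE[\tau_e] = O(n^{\Phi_{max}}\log n) = O(n^{\Phi_{max}+1})$.

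For (i), fix a cycle $k$ and condition on the state at $R_k$. I couple the evolution on $[R_k, R_k+T]$ to two independent sources of randomness: (a) a rate-$\Phi_{max}$ Poisson process $P$ on $\mathbb{R}_+$, of which the actual hub-infection attempts form a thinning (since $\Phi(\beta N(t)) \leq \Phi_{max}$); and (b) independent $\mathrm{Exp}(1)$ recovery clocks $X_i$, one for each leaf $i$ that is infected at time $R_k$. On the joint event $A := \{P \cap [0,T] = \emptyset\} \cap \{\max_i X_i \leq T\}$, no hub-infection attempt fires in $[R_k, R_k+T]$, so the hub remains susceptible and no leaf can be reinfected; moreover each leaf infected at $R_k$ has recovered by $R_k+T$. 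Hence $A$ forces $\tau_e \leq R_k + T$, and by the independence of $P$ and the $X_i$'s together with Bernoulli's inequality,
\[
\PP[A \mid \mathcal{F}_{R_k}] \geq e^{-\Phi_{max}T}\bigl(1-e^{-T}\bigr)^{N_k} \geq (2n)^{-\Phi_{max}}\bigl(1 - \tfrac{1}{2n}\bigr)^n \geq \tfrac{1}{2}(2n)^{-\Phi_{max}} = p.
\]

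For (ii), while the hub is susceptible the leaves evolve as a pure-death process (no new infections can occur), so the susceptible portion of cycle $k$ lasts at most the time for $N_k$ independent rate-$1$ exponentials to all fire, with expectation $\mathcal{H}_{N_k} \leq \mathcal{H}_n$; the hub's infected portion has mean $1$. Thus $\EE[R_{k+1}-R_k \mid \mathcal{F}_{R_k}] \leq \mathcal{H}_n + 1$ uniformly in $k$. Letting $K$ be the first cycle in which $A$ occurs, the strong Markov property together with the uniform lower bound from (i) makes $K$ stochastically dominated by a Geometric$(p)$ variable, so $\EE[K] \leq 1/p$, and a routine tower-property calculation yields $\EE[\tau_e] \leq \EE[K](\mathcal{H}_n+1) + T = O(n^{\Phi_{max}}\log n)$, which is $O(n^{\Phi_{max}+1})$ as claimed. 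The main obstacle is the coupling in (i): the hub's infection rate depends on the co-evolving leaf configuration, and it is the Poisson-thinning construction for the hub together with the disjoint $\mathrm{Exp}(1)$ leaf clocks that cleanly decouples these dependencies into two manifestly independent events.
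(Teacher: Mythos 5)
Your proof is correct and follows essentially the same strategy as the paper's: cycles delimited by the hub's recovery times, a lower bound on the per-cycle probability that all infected leaves recover before the hub is reinfected, and stochastic domination of the number of cycles by a geometric random variable. The only (immaterial) difference is in how that probability is bounded -- you split the race at a deterministic window $T=\log(2n)$ using a Poisson-thinning coupling, whereas the paper integrates the race between the periphery-extinction time and an $\mathrm{Exponential}(\Phi_{max})$ reinfection clock directly -- and both yield the claimed $O(n^{\Phi_{max}+1})$ bound.
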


This change from exponential to polynomial infection time is clearly seen in simulation results shown in Figure $2$. The proof of this proposition is somewhat technical, so we provide only a proof sketch here. The high-level proof structure is similar to existing analysis of the $SIS$ epidemic on a star \cite{massganesh05,bbcs05} -- the novelty lies in showing that bounded susceptibility causes a sharp transition in the epidemic lifetime.

\begin{proof}[Outline]
We consider cycles of the epidemic evolution -- each cyclical epoch starts off with the hub infected, then recovering to become susceptible, and finally getting reinfected, thereby starting the next cycle. The time from the start of an epoch to when the hub first becomes susceptible is an i.i.d $Exponential(1)$ variable -- at this time, some subset of leaf nodes are infected. Subsequently, either the hub is re-infected (i.e., a new epoch starts), or all these leaf nodes recover before the hub is re-infected -(i.e., the epidemic becomes extinct). \emph{The crux of the proof lies in estimating the probability that, under the $SIS_{bs}$ model, the leaf nodes recover before the hub is reinfected}. Given this, we then show that the total number of such epochs is stochastically dominated by a geometric random variable, whose mean is polynomial in $n$ -- this gives the above result.
\end{proof}

In Section \ref{ssec:results}, we state a much more general version (Theorem \ref{thm:SISbs-starlike}) of this result, extending it to star-like networks. The complete proof of both results are given in Section \ref{sec:proofs}.

\begin{figure}[t]
\label{fig:sis-star}
\begin{center}
\includegraphics[width=\columnwidth]{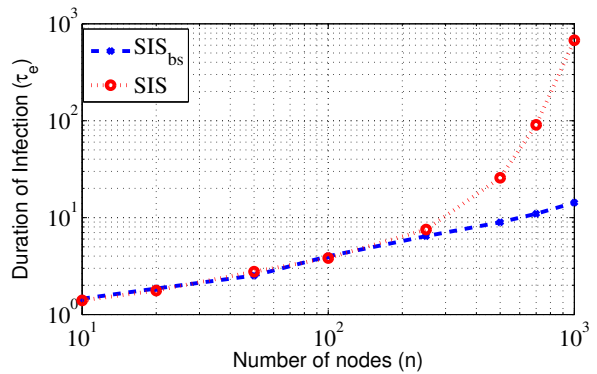}
\caption{Simulated epidemic extinction-time in a star under $SIS$ and $SIS_{bs}$, averaged over $500$ simulation runs for $\beta = 0.1$ and $\Phi_{max} = 0.5$. Note that$\EE[\tau_e]$ switches from exponential under $SIS$ to polynomial under $SIS_{bs}$ (polynomial to linear on $\log-\log$ plot).}
\end{center}
\end{figure}

\subsection{Main Results}
\label{ssec:results}

We now state our main results, which generalize results from the previous section to clique-like and star-like graphs.

\subsubsection{$SI-SI_{bs}$ Dynamics}
\label{ssec:SIresults}

Recall that, for a graph $G$, the Cheeger constant is given by $\eta(G)= \inf_{S \subset V, |S|\leq n/2}\frac{|\delta(S)|}{|S|}$. We now have the following theorem, which allows us to bound the spreading-time of the SI epidemic in clique-like graphs:
\begin{theorem}
\label{thm:SI-upperbnd}
Given underlying graph $G$ with Cheeger constant $\eta(G)$, the spreading-time for an $SI$ epidemic with virulence $\beta$ originating at a single arbitrary node, obeys:
\begin{align*}
\EE[\tau_s]\leq \frac{(2+o(1))\log (n)}{\beta \eta(G)}.
\end{align*}
\end{theorem}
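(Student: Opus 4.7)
The plan is to track the CTMC $\{S(t)\}$ whose state is the set of currently infected nodes. Under the $SI$ dynamics this chain is monotone non-decreasing in the subset order, and while $|S(t)| = k$ with $1 \le k \le n-1$, the time until the next new infection is $\mathrm{Exp}\bigl(\beta|\delta(S(t))|\bigr)$, since each edge from $S$ to $S^c$ fires independently at rate $\beta$. Writing $T_k$ for the total time the chain spends at sets of size $k$, monotonicity gives $\tau_s = \sum_{k=1}^{n-1} T_k$ almost surely, and linearity of expectation reduces the problem to bounding $\EE[T_k]$ for each $k$.

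The key structural input is a Cheeger-style bound applied symmetrically to both a set and its complement. Since $|\delta(S)| = |\delta(S^c)|$ and $\eta(G)$ is an infimum over sets of size at most $n/2$, any $S$ with $1 \le |S| \le n-1$ satisfies
\[
|\delta(S)| \;\ge\; \eta(G)\,\min\bigl(|S|,\, n-|S|\bigr).
\]
Conditioning on the infected set at the instant the $k$-th node becomes infected, the waiting time until the next infection is exponential with rate at least $\beta\eta(G)\min(k,n-k)$; taking (unconditional) expectations gives $\EE[T_k] \le \tfrac{1}{\beta\eta(G)\min(k,n-k)}$.

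Summing and splitting the sum at $k = \lfloor n/2 \rfloor$ yields
\[
\EE[\tau_s] \;\le\; \sum_{k=1}^{n-1} \frac{1}{\beta\eta(G)\min(k,n-k)} \;\le\; \frac{2\,\mathcal{H}_{\lceil n/2\rceil}}{\beta\eta(G)} \;=\; \frac{(2+o(1))\log n}{\beta\eta(G)},
\]
where the last equality uses $\mathcal{H}_m = \log m + O(1)$. The factor $2$ arises naturally from combining a growth phase ($|S| \le n/2$, where the cut is large because $|S|$ is large) with a shrink-the-complement phase ($|S| > n/2$, where the cut is large because $|S^c|$ is small). There is no genuinely hard step: the proof amounts to identifying the right collapse of the exponentially-large set-valued chain to a one-dimensional bound on cumulative rates, and invoking the Cheeger inequality twice. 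The only minor pitfall is to remember that $\eta(G)$ is defined on sets of size at most $n/2$, so one must dualize to the susceptible side whenever $|S| > n/2$; missing this substitution would lose the sharp constant $2$ in the numerator.
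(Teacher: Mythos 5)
Your proposal is correct and follows essentially the same route as the paper: telescope the spreading time over the number of infected nodes, observe that the waiting time given the infected set $S$ is exponential with rate $\beta|\delta(S)|$, bound the cut via $|\delta(S)|\geq\eta(G)\min(|S|,n-|S|)$, and sum the resulting harmonic series split at $n/2$. The symmetric use of the Cheeger constant on the complement for $|S|>n/2$ is exactly the step the paper takes as well.
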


The proof of this theorem, presented in Section \ref{sec:proofs}, is similar to that of Proposition \ref{prop:si-clique}. Note that for the clique, we have $\eta(G) = n/2$. Plugging this into the above bound immediately gives back Proposition \ref{prop:si-clique}. However the theorem captures in finer detail the impact of expansion (as captured by $\eta(G)$) on spreading-time. As a corollary to the theorem, we have the following result for clique-like networks:
\begin{corollary}
\label{cor:SI-cliquelike}
In a network that is clique-like as per Defintion~\ref{def:clique-like}$(A)$ (i.e., $\eta(G)=\Omega(\log n)$), the expected spreading time of an $SI$ epidemic is $O(1)$.
\end{corollary}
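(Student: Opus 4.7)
The plan is to derive the corollary as an immediate consequence of Theorem \ref{thm:SI-upperbnd}, with no new machinery required. First I would recall the theorem's bound
\[
\EE[\tau_s] \leq \frac{(2+o(1))\log n}{\beta\, \eta(G)},
\]
which is valid for any underlying graph and any single-node initialization.

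Next I would invoke the two standing assumptions: the virulence $\beta$ is a constant, independent of $n$ (as stipulated in the system model), and, under Definition \ref{def:clique-like}(A), the Cheeger constant satisfies $\eta(G) = \Omega(\log n)$. Substituting these into the theorem's bound gives
\[
\EE[\tau_s] \;\leq\; \frac{(2+o(1))\log n}{\beta \cdot \Omega(\log n)} \;=\; O(1),
\]
where the $\log n$ factors in the numerator and denominator cancel and $\beta$ is absorbed into the constant.

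Since the corollary follows by direct substitution, there is no genuine obstacle in this proof; the only care needed is to ensure that the $\Omega(\log n)$ lower bound on $\eta(G)$ is used in the denominator (so that it becomes an upper bound on $1/\eta(G)$) and that the implicit constants are indeed independent of $n$ — both of which follow from the hypothesis and the standing assumption that $\beta = \Theta(1)$. Hence a two- or three-line proof suffices.
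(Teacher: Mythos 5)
Your proposal is correct and matches the paper's treatment exactly: the corollary is stated as an immediate consequence of Theorem~\ref{thm:SI-upperbnd}, obtained by substituting $\eta(G)=\Omega(\log n)$ and $\beta=\Theta(1)$ into the bound $\EE[\tau_s]\leq \frac{(2+o(1))\log n}{\beta\,\eta(G)}$ so that the logarithmic factors cancel. No further argument is needed.
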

We note also that although Theorem~\ref{thm:SI-upperbnd} bounds the \emph{expected} spreading-time, we can extend it to get a concentration bound for the spreading-time; refer to Theorem \ref{appthm:si-upper} in the Appendix for details.
To see how the result can be used to characterize the spreading-time in a variety of clique-like graphs, we present the following corollary  for the case of dense random graphs, i.e., Erd\"{o}s-R\'{e}nyi graphs in the connected regime:
\begin{corollary}
\label{corr:si-gnp}
Let $G$ be an Erd\"{o}s-R\'{e}nyi graph $G(n,p)$, with $p > \frac{32 \log n}{n}$. Then with probability at least $1-\frac{1}{n^2}$, we have $\eta(G) \geq \frac{np}{4}$. Therefore, for an $SI$ epidemic,
$$\EE[\tau_s] \leq \frac{(8 + o(1)) \log n}{\beta np}$$
 with probability at least $1-\frac{1}{n^2}$.
\end{corollary}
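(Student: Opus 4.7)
The corollary has two parts; once the Cheeger-constant bound $\eta(G)\geq np/4$ is established, the spreading-time bound follows by a direct substitution into Theorem~\ref{thm:SI-upperbnd}:
\begin{align*}
\EE[\tau_s]\leq\frac{(2+o(1))\log n}{\beta\eta(G)}\leq\frac{(8+o(1))\log n}{\beta np}
\end{align*}
on the high-probability event. So the whole job is to show $\eta(G)\geq np/4$ with probability at least $1-1/n^2$, which I would handle by the standard concentration-plus-union-bound route.

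Fix an integer $s\in\{1,\dots,\lfloor n/2\rfloor\}$ and a subset $S\subset V$ with $|S|=s$. Under the $G(n,p)$ measure, the cut-size $|\delta(S)|$ is a sum of $s(n-s)$ independent $\mathrm{Bernoulli}(p)$ random variables with mean $s(n-s)p\geq snp/2$. A multiplicative Chernoff lower-tail bound gives
\begin{align*}
\PP[|\delta(S)|<snp/4]\leq\PP\bigl[|\delta(S)|<\tfrac{1}{2}\EE|\delta(S)|\bigr]\leq\exp\!\bigl(-c\,s(n-s)p\bigr)
\end{align*}
for an absolute constant $c>0$. Taking a union bound over the $\binom{n}{s}\leq(en/s)^s$ subsets of each fixed size, and summing over $s$,
\begin{align*}
\PP[\eta(G)<np/4]\leq\sum_{s=1}^{\lfloor n/2\rfloor}\binom{n}{s}\exp\!\bigl(-c\,s(n-s)p\bigr).
\end{align*}
For $s\leq n/2$ we have $s(n-s)\geq sn/2$, so the hypothesis $p>32\log n/n$ forces the exponent to be at least $16c\cdot s\log n$, while $\log\binom{n}{s}\leq s(1+\log n)$. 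With a sufficiently sharp Chernoff constant $c$, each summand is therefore at most $n^{-\alpha s}$ for some $\alpha>2$, making the sum geometric and bounded by $1/n^2$.

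The only delicate point is tuning the Chernoff constant so that the threshold $p>32\log n/n$ actually suffices to push the failure probability to $1/n^2$ rather than merely $O(1/n)$. For this I would use the sharp entropy form of Chernoff, $\PP[X\leq(1-\delta)\mu]\leq\exp(-\mu[\delta+(1-\delta)\log(1-\delta)])$, with $\delta=1/2$; alternatively one can split the sum into a small-$s$ regime, where $(n-s)/n$ is close to $1$ and the Chernoff suppression is much stronger than $\binom{n}{s}$, and a large-$s$ regime, where $\binom{n}{s}\leq 2^n$ is easily dominated by $e^{-\Omega(n^2 p)}$. Once the constants are tracked the calculation is routine, and the union bound closes.
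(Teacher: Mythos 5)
Your proposal follows essentially the same route as the paper: a union bound over all subsets of size at most $n/2$, a multiplicative Chernoff lower-tail bound (with $\delta=1/2$) on the binomial cut size $|\delta(S)|\sim\mathrm{Binom}(s(n-s),p)$, and a summation whose dominant contribution comes from $s=1$. One caveat on your constant-tracking: the sharper entropy-form Chernoff combined with the crude estimate $n-s\geq n/2$ still only yields a failure probability of roughly $n^{-1.45}$, so it is your second suggestion --- exploiting that $(n-s)p$ is close to $np>32\log n$ in the small-$s$ regime --- that actually reaches $1/n^2$; this is precisely what the paper does by bounding the entire sum by $n$ times the $k=1$ term, giving $n\cdot n\cdot e^{-(n-1)p/8}\leq n^2\cdot n^{-4+o(1)}\leq n^{-2}$.
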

Note that by the Borel-Cantelli Lemma, the above bound on the expected spreading-time holds almost surely as $n \rightarrow \infty$. The proof is provided in Section \ref{sec:proofs}.

Putting the pieces together, we can now compare  the $SI$ and $SI_{bs}$ processes in dense random graphs:
\begin{theorem}
\label{thm:si-clique}
Let $G$ be an Erd\"{o}s-R\'{e}nyi graph $G(n,p)$, with $p > \frac{32 \log n}{n}$. For the $SI/SI_{bs}$ epidemic on $G$, starting at an arbitrary node, we have:
\begin{enumerate}[nolistsep,noitemsep]
\item Under the SI dynamics, $\EE[\tau_s] = O\left(\frac{\log n}{np}\right)$ with probability at least $1-\frac{1}{n^2}$.
\item Under the $SI_{bs}$ dynamics $\EE[\tau_s] = \Omega(\log n)$.
\end{enumerate}
\end{theorem}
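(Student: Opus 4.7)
The plan is to prove the two parts separately; each is essentially a one-step reduction to machinery already established in the excerpt.

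For Part 1 (upper bound under the $SI$ dynamics), the argument is a direct invocation of Corollary~\ref{corr:si-gnp}. Concretely, I would first verify the cut concentration claim: for $p \geq 32\log n/n$, the cut size $|\delta(S)|$ is $\text{Binomial}(|S|(n-|S|), p)$ with mean at least $np|S|/2$ for $|S|\leq n/2$, so a multiplicative Chernoff bound combined with a union bound over the $\binom{n}{s}\leq (en/s)^s$ choices for $|S|=s$ gives $\eta(G)\geq np/4$ with probability at least $1-1/n^2$. On this high-probability event, Theorem~\ref{thm:SI-upperbnd} applies pathwise, yielding
\begin{align*}
\EE[\tau_s\mid G] \;\leq\; \frac{(2+o(1))\log n}{\beta\,\eta(G)} \;=\; O\!\left(\frac{\log n}{\beta np}\right),
\end{align*}
which is exactly the asserted bound.

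For Part 2 (lower bound under the $SI_{bs}$ dynamics), the plan is to give a graph-independent $\Omega(\log n)$ lower bound that exploits only the boundedness of $\Phi$. Let $N(t)=\sum_i X_i(t)$ denote the number of infected nodes. Since all transition clocks are independent exponentials, with probability one no two infections coincide, so $N(t)$ is a pure-birth process passing through each integer $1,2,\ldots,n$ in turn, and $\tau_s=\sum_{k=1}^{n-1} T_k$ where $T_k$ is the sojourn time in state $k$. In any configuration with exactly $k$ infected nodes, each of the $n-k$ susceptible nodes is infected at rate $\Phi(\beta|I_j|)\leq\Phi_{\max}$, so the total exit rate from the state is at most $(n-k)\Phi_{\max}$. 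Conditioning on the history up to entering state $k$ and then taking expectations,
\begin{align*}
\EE[T_k] \;\geq\; \frac{1}{(n-k)\Phi_{\max}}, \qquad \EE[\tau_s]\;\geq\;\frac{\mathcal{H}_{n-1}}{\Phi_{\max}}\;=\;\Omega(\log n),
\end{align*}
using $\Phi_{\max}=\Theta(1)$. This bound makes no use of the graph $G$ at all, so it applies in particular to every realization of $G(n,p)$, and hence in expectation over the graph as well.

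The only nontrivial calculation in the whole argument is the Chernoff/union-bound for $\eta(G(n,p))$ used in Part~1, and even this is routine under the density assumption $p=\omega(\log n/n)$; the $SI_{bs}$ lower bound is essentially immediate from the definition of bounded susceptibility. So there is no real obstacle to anticipate — both parts follow mechanically once Theorem~\ref{thm:SI-upperbnd} is in hand, and the theorem's interest lies in the contrast between the two bounds rather than in the difficulty of either side.
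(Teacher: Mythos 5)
Your proposal is correct. Part~1 is exactly the paper's route: invoke Corollary~\ref{corr:si-gnp} for the high-probability bound $\eta(G)\geq np/4$ and then apply Theorem~\ref{thm:SI-upperbnd} on that event; your sketch of the Chernoff/union-bound calculation matches the paper's proof of the corollary. For Part~2 you take a genuinely different route. The paper argues by monotonicity: adding edges can only speed up the spread (a coupling argument), so the spreading time on any $G$ is stochastically lower bounded by that on the clique $K_n$, and the $\Omega(\log n)$ bound then follows from Proposition~\ref{prop:si-clique}. You instead bound the dynamics directly on $G$: in any configuration with $k$ infected nodes the total rate of the next infection is at most $(n-k)\Phi_{\max}$, so each sojourn time stochastically dominates an $Exponential((n-k)\Phi_{\max})$ variable and $\EE[\tau_s]\geq \mathcal{H}_{n-1}/\Phi_{\max}$. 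Your version is more elementary (no coupling construction is needed) and makes the graph-independence of the lower bound completely explicit; it is in fact the same argument the paper uses in the appendix for the high-probability version (Theorem~\ref{appthm:si-lower}). The paper's coupling route buys a reusable monotonicity fact ($\tau_s(G')\leq_{st}\tau_s(G)$ when $G'\supseteq G$) that it also exploits elsewhere, e.g.\ in the proof of Theorem~\ref{thm:SI-starlike}. One minor point of care, which you already handle correctly: $N(t)$ alone is not Markov on a general graph, so the sojourn-time bound must be obtained by conditioning on the full configuration upon entering state $k$, as you do.
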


\begin{proof}
The first claim follows from Corollary \ref{corr:si-gnp} and Theorem \ref{thm:SI-upperbnd}. For the second, note that given any graph $G$, and $G'$ obtained by adding edges to $G$, $\tau_s(G')\leq_{st}\tau_s(G)$ -- this can be shown via a standard coupling argument. Thus, for any graph $G$, we can always lower bound the spreading-time under the $SI_{bs}$ dynamics by that on the clique $K_n$. The second claim thus follows from Proposition \ref{prop:si-clique}.
\end{proof}


Next we turn to the case of star-like graphs. To characterize $SI$ epidemics on such networks, we first state a lemma that gives a lower bound on $\EE[\tau_s]$ in terms of average degree:
\begin{lemma}
\label{lem:SI-starlike}
Consider a graph $G$ with $|V|=n$, and an $SI$ epidemic on $G$, starting at some node $v_1$. Let $A\subseteq V$ be a subset of nodes containing $v_1$, and let $\widehat{d}_{A^c}$ be the average degree of the nodes in $V\setminus A$, i.e., $\widehat{d}_{A^c} = \frac{\sum_{v\notin A}d(v)}{n-|A|}$. Then:
\begin{align*}
\EE[\tau_s]\geq \frac{\log (n-|A|)}{\beta\widehat{d}_{A^c}}
\end{align*}
\end{lemma}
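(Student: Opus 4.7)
The plan is to couple the $SI$ dynamics with a family of \emph{independent} exponential clocks indexed by the nodes in $A^c$, thereby reducing the problem to a classical coupon-collector lower bound for the expected maximum of $|A^c|$ independent exponentials, and then apply a Jensen-type convexity argument to replace the per-node rates with the average rate $\beta\widehat{d}_{A^c}$.

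\textbf{Graphical coupling.} I would use the standard graphical construction of $SI$: for each ordered pair $(u,v)$ with $u \sim v$, attach an independent Poisson process $P_{u,v}$ of rate $\beta$, and declare a susceptible $v$ to become infected the first time some $P_{u,v}$ fires while $u$ is already infected. For each $v \in A^c$, let
\begin{align*}
E_v \;:=\; \min\bigl\{t>0 \,:\, P_{u,v}\text{ fires at time }t\text{ for some }u \in N(v)\bigr\},
\end{align*}
so $E_v$ is the minimum of $d(v)$ i.i.d.\ rate-$\beta$ exponentials, i.e., $E_v\sim\mathrm{Exp}(\beta d(v))$. Two observations are key: (i) $\tau_v \geq E_v$ almost surely in this coupling, because $v$ can become infected only at a firing of one of its incoming clocks; and (ii) the family $\{E_v\}_{v\in V}$ is mutually \emph{independent}, since $E_v$ depends only on the $d(v)$ directed clocks pointing \emph{into} $v$, and these clock-sets are pairwise disjoint across distinct nodes. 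Since $v_1 \in A$, every $v \in A^c$ is initially susceptible and must be infected dynamically, so
\begin{align*}
\tau_s \;=\; \max_{v\in V}\tau_v \;\geq\; \max_{v\in A^c}\tau_v \;\geq\; \max_{v\in A^c}E_v.
\end{align*}

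\textbf{Reduction to the i.i.d.\ case via convexity.} Let $m := n-|A|$ and $\bar\lambda := \beta\widehat{d}_{A^c}$. Writing the tail of the maximum,
\begin{align*}
\PP\!\left[\max_{v\in A^c}E_v > t\right] \;=\; 1 - \prod_{v\in A^c}\bigl(1 - e^{-\beta d(v)\, t}\bigr),
\end{align*}
I would first apply AM--GM to $\{1-e^{-\beta d(v)t}\}_{v\in A^c}$, giving $\prod_v(1-e^{-\beta d(v)t}) \leq \bigl(1 - \tfrac{1}{m}\sum_v e^{-\beta d(v)t}\bigr)^m$, and then apply Jensen's inequality to the convex function $x\mapsto e^{-x}$ to get $\tfrac{1}{m}\sum_{v\in A^c}e^{-\beta d(v)t} \geq e^{-\bar\lambda t}$. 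Chaining these,
\begin{align*}
\PP\!\left[\max_{v\in A^c}E_v > t\right] \;\geq\; 1 - (1 - e^{-\bar\lambda t})^m,
\end{align*}
i.e., $\max_{v\in A^c}E_v$ stochastically dominates the maximum of $m$ i.i.d.\ $\mathrm{Exp}(\bar\lambda)$ variables. Integrating the tail yields $\EE[\max_v E_v]\geq \mathcal{H}_m/\bar\lambda$, and $\mathcal{H}_m\geq\log m$ then gives the claim.

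\textbf{Main obstacle.} The only subtle point is extracting the full logarithmic factor rather than just $1/(\beta\widehat{d}_{A^c})$. The one-sided bound $\EE[\tau_v]\geq 1/(\beta d(v))$, combined with $\EE[\tau_s]\geq \tfrac{1}{m}\sum_v \EE[\tau_v]$ and the harmonic--arithmetic mean inequality, only gives $\EE[\tau_s]\geq 1/(\beta\widehat{d}_{A^c})$, losing the $\log(n-|A|)$ factor. Recovering the log relies crucially on mutual independence of the $\{E_v\}$, which in turn requires the directed-clock construction (two independent Poisson processes per undirected edge): with a single undirected clock per edge the lower-bound clocks of neighboring nodes would share randomness and the coupon-collector reduction would fail.
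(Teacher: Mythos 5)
Your proof is correct and follows essentially the same route as the paper: lower-bound each node of $A^c$'s infection time by an independent $\mathrm{Exp}(\beta d(v))$ variable, then show the maximum of these heterogeneous exponentials stochastically dominates the maximum of $n-|A|$ i.i.d.\ exponentials at the average rate $\beta\widehat{d}_{A^c}$, and integrate the tail to get $\mathcal{H}_{n-|A|}/(\beta\widehat{d}_{A^c})$. The only differences are in the details: your directed-clock graphical construction makes explicit the independence of the per-node lower bounds that the paper merely asserts, and your AM--GM plus Jensen chain is a cleaner substitute for the paper's exchange-argument proof of Lemma~\ref{lem:max-exp-stoc-domin} (that the uniform rate vector maximizes $\prod_i(1-e^{-a_i x})$ subject to $\sum_i a_i = A$).
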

For example, for the star graph with $v_1$ chosen to be the hub, we have $\widehat{d}_{\{v_1\}^c}=1$, and so we get back the lower bound in Proposition \ref{prop:si-star}. We now use the above lemma to show that imposing bounded susceptibility does not significantly change the spreading-time in star-like networks:

\begin{theorem}
\label{thm:SI-starlike}
Consider a network $G$ of diameter $O(\log n)$ that is star-like by Definition~\ref{def:star-like}$(A)$.  In other words, there exists a partition of the nodes into a core $V_c$ and periphery $V_p$, parametrized by:
\begin{itemize}[nolistsep,noitemsep]
\item \emph{Core size:} $|V_c|=m$ is at most a constant fraction of nodes (i.e. $m\leq cn$ for some $c<1$). In other words, $|V_p|= \Omega(n)$.
\item \emph{Sparse periphery:} $\widehat{d_p}=O(1)$, i.e., the average degree of nodes in the periphery is constant (or lesser).
\end{itemize}
 Then both under the $SI$ and $SI_{bs}$ dynamics, $\EE[\tau_s]=\Theta\left(\log n\right)$.
\end{theorem}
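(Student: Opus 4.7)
The plan is to establish $\EE[\tau_s]=\Theta(\log n)$ by proving matching $\Omega(\log n)$ lower bounds and $O(\log n)$ upper bounds for both dynamics. The lower bounds exploit the sparse periphery, while the upper bounds exploit the diameter assumption. The main new content relative to Proposition~\ref{prop:si-star} is showing that the (possibly dense) core does not accelerate the epidemic past $\Theta(\log n)$, and that bounded susceptibility does not slow it past $\Theta(\log n)$.

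\emph{Lower bounds.} For the $SI$ process I would apply Lemma~\ref{lem:SI-starlike} with $A:=V_c$: since $n-|A|=|V_p|\geq (1-c)n$ and $\widehat{d}_{A^c}=\widehat{d_p}=O(1)$, it yields
\begin{align*}
\EE[\tau_s^{SI}] \;\geq\; \frac{\log((1-c)n)}{\beta\,\widehat{d_p}} \;=\; \Omega(\log n).
\end{align*}
For the $SI_{bs}$ process the instantaneous rate at any susceptible node is bounded by $\Phi_{\max}=\Theta(1)$, so while $k$ nodes remain susceptible the total rate of the next infection is at most $k\Phi_{\max}$. Summing expected sojourn times exactly as in the second part of Proposition~\ref{prop:si-clique} gives
\begin{align*}
\EE[\tau_s^{SI_{bs}}] \;\geq\; \sum_{k=1}^{n-1} \frac{1}{k\Phi_{\max}} \;=\; \frac{\mathcal{H}_{n-1}}{\Phi_{\max}} \;=\; \Omega(\log n).
\end{align*}

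\emph{Upper bounds.} Let $D=O(\log n)$ denote the diameter of $G$, and for each node $v$ fix a shortest path $v_1=u_0,u_1,\ldots,u_{\ell_v}=v$ with $\ell_v\leq D$. For $SI$ I would couple with first-passage percolation by assigning iid $\mathrm{Exp}(\beta)$ weights to the directed edges; then the infection time $T_v$ of $v$ is dominated by the total weight $W_v$ of this fixed path, a sum of $\ell_v\leq D$ iid $\mathrm{Exp}(\beta)$ variables. For $SI_{bs}$ I would first dominate by a slower ``threshold'' process in which each susceptible node having at least one infected neighbor is infected at rate exactly $\Phi(\beta)>0$ (valid because $\Phi$ is nondecreasing, so the true $SI_{bs}$ rate is always at least this); the analogous path bound then dominates $T_v$ by a sum of $\ell_v\leq D$ iid $\mathrm{Exp}(\Phi(\beta))$ variables. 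A standard Chernoff estimate for Gamma sums gives $\PP[W_v>C\log n]\leq n^{-3}$ for a sufficiently large constant $C$ depending only on $\beta$, $\Phi(\beta)$, and the hidden constant in $D=O(\log n)$; a union bound over $v$ yields $\PP[\tau_s>C\log n]\leq n^{-2}$. Converting to an expectation bound needs only a crude worst-case dominator for $\tau_s$ (e.g., the total weight of all $\binom{n}{2}$ possible edges) to absorb the residual tail, giving $\EE[\tau_s]=O(\log n)$.

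\emph{Main obstacle.} The subtlest step is the $SI_{bs}$ upper bound, because the rate $\Phi(\beta|I_v|)$ at $v$ depends on the global state and one cannot decouple it into independent contributions per edge as in ordinary FPP. The fix is the monotone coupling with the threshold process, which works because $\Phi$ is nondecreasing and $\Phi(\beta)>0$. A secondary technical issue is making the Chernoff exponent, which for $D$ exponentials of rate $\Theta(1)$ is only linear in $D$, strong enough to survive the union bound over $n$ nodes when $D$ itself may grow as $\log n$; this forces $C$ to depend on the constant implicit in $D=O(\log n)$, but does not affect the $\Theta(\log n)$ conclusion.
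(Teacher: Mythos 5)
Your proposal is correct and follows essentially the same route as the paper: Lemma~\ref{lem:SI-starlike} for the $SI$ lower bound, the uniform rate cap $\Phi_{\max}$ for the $SI_{bs}$ lower bound (the paper packages this as stochastic domination by the clique plus Proposition~\ref{prop:si-clique}, which is the identical computation), and domination of each node's infection time by the weight of a shortest path of length at most $\mathrm{diam}(G)$ for both upper bounds, with your ``threshold at rate $\Phi(\beta)$'' coupling matching the paper's reduction of $SI_{bs}$ on a spanning tree to an $SI$ process of rate $\Phi(\beta)$. The only nitpick is that Lemma~\ref{lem:SI-starlike} requires $A$ to contain the seed $v_1$, so you should take $A=V_c\cup\{v_1\}$ as the paper does (this changes nothing asymptotically).
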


Note that the sparsity constraint is on the average degree of nodes in $V_p$, not in the subgraph induced by $V_p$, i.e., it also accounts for the edges in the cut $\delta(V_p)$. 

\begin{proof}
First consider the $SI$ dynamics, starting at some arbitrary node $v_1$.
We want to derive a lower bound on the spreading-time using Lemma \ref{lem:SI-starlike} -- to do so, we choose $A=V_c\cup\{v_1\}$, and hence $A^c=V_p\setminus\{v_1\}$. Now using Lemma \ref{lem:SI-starlike} and the fact that $|V_p|\sim poly(n)$, and that both $\beta$ and $\widehat{d}_p$ are $\Theta(1)$ we have that $\EE[\tau_s]=\Omega(\log n)$. 

To obtain an upper bound on the spreading-time, we under-dominate the spread by considering epidemic spread on the shortest-path spanning tree of the network rooted at the initial infected node. The resulting depth is at most the diameter of the tree. Thus, the spreading-time is stochastically upper bounded by the maximum of $n$ i.i.d random variables distributed as sum of $diam(G)$ number of exponential random variables with rate $\beta$. Finally, using standard concentration results for the sum of exponentials (refer \cite{gopban11}), we have that $\EE[\tau_s]=O(\log n)$.

For $SI_{bs}$ epidemic, we can first over-dominate the epidemic spread by converting the network into a clique -- we have from Proposition \ref{prop:si-clique} that the time taken to infect all nodes is $\Omega(\log n)$. On the other hand, we can again under-dominate the spread by considering an $SI_{bs}$ process on the minimum spanning tree which is equivalent to an $SI$ process of rate $\Phi(\beta)$. Consequently, $\EE[\tau_s]=\Theta(\log n)$. 
\end{proof}

\subsubsection{$SIS-SIS_{bs}$ Dynamics}
\label{ssec:SISresults}

Next we look at the $SIS-SIS_{bs}$ dynamics on clique-like and star-like graphs. Recall that in this case, we are interested in the extinction-time $\tau_e$. In particular, we are interested in characterizing networks in which the epidemic is \emph{persistent}, i.e. the extinction-time is exponential in $n$ (i.e., $\EE[\tau_e]\sim\Omega(e^{poly(n)})$).

Existing works \cite{massganesh05,bbcs05,Banerjee2014a} demonstrate that the virulence thresholds for short-lived ($\EE[\tau_e]\sim O(\log n)$) $SIS$ epidemics depend on the spectral radius, $\rho(G)$,  while thresholds for persistent epidemics are related to the generalized isoperimetric constants. Though the results for the SIS epidemic depend on edge-expansion isoperimetric constant $\eta^e_m(G)$, the vertex-expansion isoperimetric constant $\eta^v_m(G)$ turns out to be the appropriate notion for the $SIS_{bs}$ epidemic, as follows:

\begin{theorem}
\label{thm:SIS-clique}
Consider an $SIS_{bs}$ epidemic on graph $G$, with virulence $\beta$ and infection-profile $\Phi(\cdot)$. Suppose for $m=n^{\alpha}$ for some $0<\alpha <1$, we have that:
\begin{align*}
\Phi(\beta) \eta_{m}^v(G)> 1,
\end{align*}
then $\EE[\tau_e] = e^{\Omega(n^\alpha)}$, i.e., the $SIS_{bs}$ epidemic is persistent .
\end{theorem}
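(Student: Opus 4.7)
The plan is to derive a lower bound on $\EE[\tau_e]$ by comparing the $SIS_{bs}$ process to a one-dimensional birth-death chain via a Dynkin/Lyapunov argument. Let $c := \Phi(\beta)\eta_m^v(G) > 1$ and let $U(t)$ be the birth-death chain on $\{0,1,\ldots,m\}$ with birth rates $q^U_{i,i+1} = ci$ and death rates $q^U_{i,i-1} = i$, reflecting at $m$ and absorbing at $0$. Applying Lemma \ref{lemma:absorption-time} directly yields
\begin{align*}
\EE[T_U(1,0)] = 1 + \sum_{k=2}^{m}\prod_{i=1}^{k-1}\frac{ci}{i+1} = \sum_{k=1}^{m}\frac{c^{k-1}}{k} \geq \frac{c^{m-1}}{m},
\end{align*}
which is $e^{\Omega(n^\alpha)}$ since $c$ is a constant strictly greater than $1$ and $m = n^\alpha$. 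So it suffices to show $\EE[\tau_e] \geq \EE[T_U(1,0)]$ whenever the $SIS_{bs}$ process starts from a single infected node.

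The first step is to use the vertex-expansion hypothesis to lower-bound the rate at which $|I|$ grows. For any configuration $\mathbf{x}$ whose infected set $I$ has size $i \leq m$, the isoperimetric bound gives $|\Gamma(I)| \geq \eta_m^v(G)\cdot i$; each node in $\Gamma(I)$ has at least one infected neighbor, and so is infected at rate $\geq \Phi(\beta)$ by monotonicity of $\Phi$. Summing over $\Gamma(I)$, the total birth rate of $|I|$ satisfies $B(\mathbf{x}) \geq ci$, while the death rate of $|I|$ is exactly $i$.

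The second step converts this rate inequality into a hitting-time bound. Define $g(i) := \EE[T_U(\min(i,m),0)]$, which is non-decreasing in $i$ by standard monotone coupling of birth-death chains, and set the Lyapunov function $\tilde{h}(\mathbf{x}) := g(|I(\mathbf{x})|)$. A case analysis shows that the $SIS_{bs}$ generator $\mathcal{L}$ satisfies $\mathcal{L}\tilde{h}(\mathbf{x}) \geq -1$ at every non-absorbing $\mathbf{x}$: for $|I| < m$ one has $\mathcal{L}\tilde{h}(\mathbf{x}) - \mathcal{L}_U g(|I|) = (B(\mathbf{x}) - c|I|)(g(|I|+1) - g(|I|)) \geq 0$ combined with $\mathcal{L}_U g \equiv -1$; at $|I| = m$ equality $\mathcal{L}\tilde{h} = -1$ follows from $g(m+1) = g(m)$ together with $U$ reflecting at $m$; and for $|I| > m$, $g$ is constant so $\mathcal{L}\tilde{h} = 0 \geq -1$. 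Hence $\tilde{h}(\mathbf{X}(t)) + t$ is a submartingale, and optional stopping at $\tau_e$ (legitimate since a finite CTMC with reachable absorbing state has $\EE[\tau_e] < \infty$ a priori on the positive-virulence process) delivers $\EE_{\mathbf{x}_0}[\tau_e] \geq \tilde{h}(\mathbf{x}_0) = g(1) = \EE[T_U(1,0)]$ when $|I(\mathbf{x}_0)| = 1$.

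The main obstacle is the verification of $\mathcal{L}\tilde{h} \geq -1$ across all configurations, since the $|I|$-process is not Markov on its own and the expansion bound only controls $B(\mathbf{x})$ on $|I| \leq m$. The key design choice that makes the argument work is reflecting $U$ at $m$ and extending $g$ as a constant for $|I| > m$: transitions outside the expansion regime then contribute zero to $\mathcal{L}\tilde{h}$, and the inequality in the expansion regime reduces to the monotonicity of $g$. Everything else reduces to elementary properties of birth-death chains and the standard Dynkin/optional-stopping machinery.
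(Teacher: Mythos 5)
Your proposal is correct and follows essentially the same route as the paper: both use the vertex-expansion hypothesis to under-dominate the infected-set size by the birth-death chain with birth rate $\Phi(\beta)\eta_m^v(G)\,i$ and death rate $i$ on $\{0,1,\ldots,m\}$, and then invoke Lemma~\ref{lemma:absorption-time} to extract the lower bound $\sum_{k=1}^{m} c^{k-1}/k = e^{\Omega(n^\alpha)}$. The only difference is in mechanism: where the paper justifies $\EE[\tau_e]\geq\EE[T_U(1,0)]$ by a standard monotone coupling, you obtain it via a Dynkin/optional-stopping argument with the Lyapunov function $g(\min(|I|,m))$, a valid and somewhat more self-contained substitute that also handles the $|I|>m$ regime explicitly.
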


\begin{proof}[Outline]
The proof is similar to that of Proposition \ref{prop:sis-clique}. In particular, we construct an under-dominating CTMC for the number of infected nodes under $SIS_{bs}$. Next, we use Lemma \ref{lemma:absorption-time} to get an expression for the extinction-time, and show that the bound is indeed exponential in $n$ if $\Phi(\beta) \eta_{m}^v(G)> 1$. For complete proof, see Section \ref{sec:proofs}.
\end{proof}
An immediate consequence of this result is the following corollary for clique-like networks.
\begin{corollary}
\label{cor:SIS-cliquelike}
An $SIS_{bs}$ epidemic is persistent in any network belonging to the class of clique-like networks as per Definition~\ref{def:clique-like}$(B)$. 
\end{corollary}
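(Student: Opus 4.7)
The plan is to observe that Corollary~\ref{cor:SIS-cliquelike} follows almost directly from Theorem~\ref{thm:SIS-clique} once we verify that the sufficient condition in the theorem is implied by Definition~\ref{def:clique-like}(B). So the proof is essentially a matter of chaining together a hypothesis unpacking, a constant-order observation about $\Phi$, and an application of the theorem.

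First, I would unpack the hypothesis: if $G$ is clique-like according to Definition~\ref{def:clique-like}(B), then by definition there exists some exponent $\alpha\in(0,1)$ and the corresponding $m=n^\alpha$ such that $\eta_m^v(G)=\omega(1)$ as $n\to\infty$. I would fix this $\alpha$ (and the associated $m$) for the rest of the argument.

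Next, I would note that $\Phi(\beta)$ is a positive constant. Since $\beta=\Theta(1)$ by our standing assumption, and since $\Phi$ is non-decreasing with $\Phi(x)>0$ for every $x>0$ (this is asserted immediately after the infection-profile is introduced), we have $\Phi(\beta)=\Theta(1)>0$, independent of $n$. Consequently, the product satisfies
\begin{equation*}
\Phi(\beta)\,\eta_m^v(G)=\Theta(1)\cdot\omega(1)=\omega(1),
\end{equation*}
which in particular exceeds $1$ for all sufficiently large $n$.

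Finally, with the hypothesis $\Phi(\beta)\,\eta_m^v(G)>1$ of Theorem~\ref{thm:SIS-clique} verified (for the very $m=n^\alpha$ supplied by Definition~\ref{def:clique-like}(B)), I would invoke that theorem to conclude $\EE[\tau_e]=e^{\Omega(n^\alpha)}$, which is $e^{\Omega(\mathrm{poly}(n))}$, exactly the definition of a persistent epidemic used throughout Section~\ref{ssec:SISresults}. There is no real obstacle here: all of the work has already been done in Theorem~\ref{thm:SIS-clique}; the only thing to be careful about is that the same exponent $\alpha$ from the clique-like definition is used when applying the theorem, and that we explicitly invoke the $n$-independence of $\Phi(\beta)$ to absorb it into the $\omega(1)$ bound.
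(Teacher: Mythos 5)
Your proposal is correct and matches the paper's (implicit) argument: the paper presents this corollary as an immediate consequence of Theorem~\ref{thm:SIS-clique}, and your chain --- unpacking Definition~\ref{def:clique-like}(B) to get $\eta_m^v(G)=\omega(1)$ for $m=n^\alpha$, noting $\Phi(\beta)=\Theta(1)>0$ so that $\Phi(\beta)\eta_m^v(G)>1$ for large $n$, and invoking the theorem to get $\EE[\tau_e]=e^{\Omega(n^\alpha)}$ --- is exactly the intended verification. Nothing is missing.
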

Returning to the $SIS$ epidemic, the strongest known condition for a persistent $SIS$ epidemic is $\beta\eta^e_m(G) >1$ \cite{massganesh05}. Now since $\eta^e_m(G)>\eta^v_m(G)$, this implies that \emph{$\beta\eta^v_m(G)>1$ is also sufficient for a persistent $SIS$ epidemic}, i.e., without bounded susceptibility. Theorem \ref{thm:SIS-clique} shows that the condition for a persistent $SIS_{bs}$ epidemic is closely related. Furthermore, as we show subsequently, the vertex-expansion condition is strong enough to recover all the results from \cite{massganesh05}, in particular for the $G(n,p)$ and power-law graphs.

We now use Theorem~\ref{thm:SIS-clique} to demonstrate that both $SIS$ and $SIS_{bs}$  epidemics have exponential extinction-times in dense random graphs:
\begin{theorem}
\label{thm:sis-gnp}
For both the $SIS$ and $SIS_{bs}$ epidemics on a $G(n,p)$ graph with $p > \frac{16\log n}{n}$. For a large enough $n$, with probability at least $1-\frac{1}{n^2}$ we have,
$$\log\left(\EE[\tau_e]\right) = \Omega(n).$$
\end{theorem}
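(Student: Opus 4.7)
The plan is to reduce Theorem~\ref{thm:sis-gnp} to an application of Theorem~\ref{thm:SIS-clique} by verifying its vertex-expansion hypothesis for the random graph $G(n,p)$ with $p > 16\log n/n$. I would first observe that it suffices to prove the bound for the $SIS_{bs}$ epidemic: the $SIS$ dynamics can be coupled with $SIS_{bs}$ so that the infected set under $SIS_{bs}$ is always contained in the infected set under $SIS$ (the per-node infection rate $\Phi(\beta|I_j|)$ is bounded by $\Phi_{\max}$ whereas $SIS$ has the larger rate $\beta|I_j|$ once the neighborhood is sufficiently infected). Hence $\tau_e^{SIS}$ stochastically dominates $\tau_e^{SIS_{bs}}$, and a lower bound for $SIS_{bs}$ transfers to $SIS$.

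Next, I would show that with probability at least $1-1/n^2$, there exists $\alpha \in (0,1)$ and a constant $K > 1/\Phi(\beta)$ such that $\eta_m^v(G) \geq K$ for $m = n^{\alpha}$. For a fixed subset $S$ of size $s$, each $v \notin S$ belongs to $\Gamma(S)$ independently with probability $1-(1-p)^s$, so $\EE[|\Gamma(S)|] = (n-s)(1-(1-p)^s)$. In the small-$s$ regime ($sp = o(1)$) this is approximately $snp \geq 16 s\log n$, while in the larger-$s$ regime it is $\Theta(n-s)$; in both regimes $\EE[|\Gamma(S)|]/s = \omega(1)$. A standard Chernoff bound gives $\PP[|\Gamma(S)| < \tfrac{1}{2}\EE[|\Gamma(S)|]] \leq e^{-c \EE[|\Gamma(S)|]}$, and a union bound over $\binom{n}{s} \leq n^s$ subsets (then summed over $s \leq m$) succeeds because $np \geq 16 \log n$ dominates $\log n$ by a large enough constant. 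This establishes that $G(n,p)$ is clique-like in the sense of Definition~\ref{def:clique-like}(B) with overwhelming probability, and Theorem~\ref{thm:SIS-clique} immediately yields $\log(\EE[\tau_e]) = \Omega(n^{\alpha})$.

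To push the exponent from $n^{\alpha}$ to $n$, I would argue that for $p > 16\log n/n$ the vertex-expansion bound actually holds at a linear scale $m = \gamma n$ for some small constant $\gamma > 0$: when $|S| = s \geq \gamma n$, we have $sp \geq 16\gamma \log n$, so $(1-p)^s \leq n^{-16\gamma}$ and $\EE[|\Gamma(S)|] \geq (n-s)(1 - n^{-16\gamma})$, i.e., $\Gamma(S)$ saturates $V \setminus S$ up to lower-order terms. Choosing $\gamma$ small enough that $(1-\gamma)/\gamma > 1/\Phi(\beta) + 1$ and applying the Chernoff/union bound argument (which still succeeds because the Chernoff exponent is of order $n$, while $\log \binom{n}{\gamma n} = O(n)$ with a smaller constant) extends the vertex-expansion lower bound to $m = \gamma n$. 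The proof of Theorem~\ref{thm:SIS-clique} (which, via Lemma~\ref{lemma:absorption-time}, produces an extinction-time bound of the form $c^{m}$ for a constant $c > 1$) then yields $\log(\EE[\tau_e]) = \Omega(n)$.

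The main obstacle is the final step: carefully executing the union bound at the linear scale $m = \gamma n$ so that the Chernoff exponent strictly beats $\log \binom{n}{s}$ uniformly in $s \leq \gamma n$, and checking that the proof of Theorem~\ref{thm:SIS-clique} does not implicitly rely on $\alpha < 1$ for anything beyond stating the exponent. Everything else is routine measure-concentration on $G(n,p)$, paralleling the edge-expansion argument already used in Corollary~\ref{corr:si-gnp}.
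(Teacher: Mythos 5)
Your overall route --- verify the vertex-expansion hypothesis of Theorem~\ref{thm:SIS-clique} for $G(n,p)$ via a Chernoff bound on $|\Gamma(S)|\sim \mathrm{Binomial}\bigl(n-|S|,\,1-(1-p)^{|S|}\bigr)$ followed by a union bound over subsets --- is exactly the paper's proof. But your reduction of the $SIS$ case to the $SIS_{bs}$ case has a genuine gap: the monotone coupling you invoke does not exist in general. The per-node rates $\Phi(\beta k)$ and $\beta k$ are not ordered pointwise: concavity gives $\Phi(\beta k)\le k\,\Phi(\beta)$, but nothing in the model forces $\Phi(\beta)\le\beta$ (take $\Phi(x)=\min\{10x,1\}$ and $\beta=0.01$, so $\Phi(\beta)=0.1>\beta$), so a susceptible node with a single infected neighbor can be infected \emph{faster} under $SIS_{bs}$ than under $SIS$, and the claimed containment of the $SIS_{bs}$ infected set in the $SIS$ infected set fails. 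The paper sidesteps this entirely: the under-dominating birth--death chain in the proof of Theorem~\ref{thm:SIS-clique} works verbatim for $SIS$ with $\beta$ in place of $\Phi(\beta)$ (every node of $\Gamma(I)$ is infected at rate at least $\beta$), or equivalently one notes $\eta^e_m(G)\ge\eta^v_m(G)$ and invokes the known edge-expansion criterion $\beta\,\eta^e_m(G)>1$ from \cite{massganesh05}. Replace your coupling with either of these and the $SIS$ half goes through.

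A point in your favor: you correctly observe that applying Theorem~\ref{thm:SIS-clique} at scale $m=n^{\alpha}$ with $\alpha<1$ only yields $\log\EE[\tau_e]=\Omega(n^{\alpha})$, which is all the paper's written proof actually delivers despite the stated exponent $\Omega(n)$. Your extension of the expansion estimate to linear scale $m=\gamma n$ (with $\gamma$ small enough that $(1-\gamma)/\gamma$ comfortably exceeds $1/\Phi(\beta)$, and noting that the proof of Theorem~\ref{thm:SIS-clique} only requires $m\le n/2$, since that is where $\eta^v_m$ is defined) is precisely what is needed to close this gap: the Chernoff exponent is $\Theta(n)$ while $\log\binom{n}{s}\le s\log(en/s)=O\bigl(\gamma\log(1/\gamma)\,n\bigr)$ uniformly over $s\le\gamma n$, which can be made smaller by shrinking $\gamma$. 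Provided you carry out that union bound uniformly over all intermediate sizes $1\le s\le\gamma n$ (the regimes $sp=o(1)$ and $sp=\Omega(1)$ need separate but routine estimates), this part of your argument is sound and in fact strengthens the paper's own proof.
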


\begin{proof}[Outline]
We show under the conditions of the theorem that $\Phi(\beta) \eta_{m}^v(G)> 1$ with probability $\geq 1-\frac{1}{n^2}$ if $m=n^{\alpha}$ for any constant $\alpha \in (0,1)$. The result then follows using Theorem~\ref{thm:SIS-clique}. Complete proof is given in Section~ \ref{sec:proofs}.
\end{proof}

Note also that any graph that contains a large clique-like subgraph is prone to an exponentially long lasting infection, even under bounded susceptibility, if the initial infected node belongs to the subgraph. Clearly a clique, or a realization of an Erd\"{o}s-R\'{e}nyi graph in the connected regime experiences long lasting infection -- however such dense random subgraphs appear in other graph families including `power-law' or scale-free networks. We discuss this further in Section \ref{ssec:disc} (in particular, Corollary \ref{cor:power-law}).

%

More generally, these results show that large dense networks experience persistent epidemics even under bounded susceptibility \emph{due to the presence of large clique-like subgraphs}. This is in sharp contrast to existing work (for example, \cite{bbcs05}) which suggests that persistent epidemics in such networks arise due to the presence of high-degree nodes (`hubs').

In fact, it turns out that the presence of such `hubs' do not help -- we took a step towards showing this in Proposition \ref{prop:sis-star}, where we showed that a star graph does not experience persistent $SIS_{bs}$ epidemics. We now show that this is true for a much larger class of star-like graphs. As before, the node-set $V$ is decomposed into core $V_c$ and periphery $V_p$. We characterize the sparsity of the periphery via the \emph{spectral radius} $\rho(G(V_p))$ of the periphery subgraph -- recall that $d_{avg}(G(V_p))\leq\rho(G(V_p))\leq d_{\max}(G(V_p))$. Now we have:

\begin{theorem}
\label{thm:SISbs-starlike}
Consider a network $G$, with the nodes partitioned into a core $V_c$ and periphery $V_p$ characterized by:
\begin{itemize}[nolistsep,noitemsep]
\item \emph{Core size:} $|V_c|=m$.
\item \emph{Periphery sparsity:} Parametrized by $\rho(G(V_p))$.
\end{itemize}
Suppose $G(V_p)$ satisfies $\Phi(\beta)\rho(G(V_p))= 1-\epsilon$, for some $\epsilon > 0$. Then for an $SIS_{bs}$ epidemic on $G$,:
\begin{align*}
\log\left(\EE[\tau_e]\right)= O(m\log n).
\end{align*}
\end{theorem}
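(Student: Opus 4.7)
The plan is to leverage two structural facts: (i) on the periphery, the $SIS_{bs}$ dynamics is dominated by a strictly subcritical linear SIS, so in the absence of core help the periphery infection dies out quickly; and (ii) the epidemic can persist only by repeatedly reinfecting core nodes, each of which affords an ``extinction opportunity,'' and there are only $m$ of them. Putting these together, the expected number of reinfection cycles before extinction will be bounded by $n^{O(m)}$, which yields $\log\EE[\tau_e]=O(m\log n)$.

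First I would establish the subcritical-periphery bound. Since $\Phi$ is concave with $\Phi(0)=0$, we have $\Phi(\beta k)\leq k\Phi(\beta)$, so the rate at which a susceptible periphery node $j$ gets infected by its \emph{periphery} neighbors is at most $\Phi(\beta)|I_j\cap V_p|$. Thus the periphery-only dynamics is stochastically dominated by a linear SIS on $G(V_p)$ with per-edge rate $\Phi(\beta)$. Since $\Phi(\beta)\rho(G(V_p))=1-\epsilon<1$, standard spectral Lyapunov arguments imply that, starting from any configuration on $V_p$, this dominating process becomes extinct within time $T:=\Theta(\log n/\epsilon)$ with probability bounded away from $0$.

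Next I would decompose time along the successive core-recovery events $\sigma_1<\sigma_2<\cdots$, i.e.\ the times at which some core node transitions from infected to susceptible, and let $N_c$ denote the number of such events before extinction. Between consecutive recoveries, either some core node is infected (so the next recovery arrives at rate $\geq 1$), or no core node is infected (in which case, by the preceding step, the periphery either dies out within $O(\log n)$ time in expectation or triggers a single core (re)infection that itself recovers in $O(1)$ time after). Hence every inter-recovery interval has expected length $O(\log n)$.

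The heart of the argument is to show that every core-recovery event offers a bona fide chance of extinction. On the window $[\sigma_k,\sigma_k+T]$, consider the event $E_k$ that (a) the periphery subprocess becomes fully susceptible, (b) every currently-infected core node recovers, and (c) no currently-susceptible core node is (re)infected. Part (a) holds with constant probability by step 1 together with Markov's inequality (once the core is silent, the dominating subcritical SIS clears the residual periphery infection in time $O(\log n/\epsilon)$); (b) holds with probability $1-o(1)$ since each infected core node recovers independently at rate $1$ and $T=\omega(1)$; and (c) is the bottleneck, but uniformly dominating each susceptible core node's instantaneous infection rate by $\Phi_{\max}$ gives $\Pr[\text{no core node is (re)infected in the window}]\geq\exp(-m\Phi_{\max}T)=n^{-O(m/\epsilon)}$. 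Intersecting (a)--(c) then yields $\Pr[E_k]\geq p:=n^{-O(m)}$, so $N_c$ is stochastically dominated by $\mathrm{Geom}(p)$, and combined with the $O(\log n)$ per-epoch length we obtain $\EE[\tau_e]\leq\EE[N_c]\cdot O(\log n)=n^{O(m)}$. The main obstacle will be to decouple (c) from the potentially large periphery infection present at $\sigma_k$; the crude uniform bound $\Phi_{\max}$ accomplishes this at only a constant-factor cost in the exponent of $p$, which still fits the target scaling $O(m\log n)$.
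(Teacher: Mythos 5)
Your proposal is correct and follows essentially the same route as the paper: concavity reduces the periphery dynamics to a subcritical linear SIS that is killed off by the Ganesh--Massouli\'e--Towsley spectral bound, the $\Phi_{\max}$ cap on core reinfection yields an $n^{-O(m/\epsilon)}$ per-cycle extinction probability, and geometric domination of the number of cycles gives $\EE[\tau_e]=n^{O(m)}$. The only (harmless) structural difference is that you cut cycles at individual core-recovery events and bound each inter-recovery interval by $O(\log n)$, whereas the paper cuts at the times the entire core becomes susceptible and bounds the core-active phase by $(1+\Phi_{\max})^m/(m\Phi_{\max})$ via its birth--death absorption-time lemma.
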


\begin{proof}[Outline]
The high-level proof structure is similar to the outline we gave for Proposition \ref{prop:si-star}. We again consider cyclic epochs. An epoch starts with some node in the core infected, transitioning to all core nodes becoming susceptible, and finally ends when some node in the core gets reinfected beginning the next epoch. Next, we bound under the $SIS_{bs}$ model the probability that, in any epoch, the epidemic dies out in all nodes of the periphery before any node in the core is reinfected. Finally, via stochastic domination by a geometric random variable, we get the above result. The complete proof is given in Section \ref{sec:proofs}.
\end{proof}
Though we state the above theorem for a general core size, it is most significant in settings the core is small. For example, if the core is of constant size, then \emph{$\EE[\tau_e]$ scales only polynomially with $n$} -- note that this includes the star network (where $\rho(G(V_p))=0$, as the periphery is fully disconnected if we remove the hub) that we characterized in Proposition~\ref{prop:sis-star}. More generally, the following result for star-like networks holds as a result of Theorem~\ref{thm:SISbs-starlike}.
\begin{corollary}
\label{cor:SISbs-starlike}
The extinction time of an $SIS_{bs}$ epidemic is sub-exponential in any network that is star-like as per Definition~\ref{def:star-like}$(B)$.
\end{corollary}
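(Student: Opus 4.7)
The plan is to deduce this corollary as an almost immediate consequence of Theorem~\ref{thm:SISbs-starlike}, whose hypotheses are engineered to match exactly the parameters used in Definition~\ref{def:star-like}(B). So the proof really reduces to verifying that (i) the spectral-radius condition $\Phi(\beta)\rho(G(V_p)) \le 1-\epsilon$ holds for some fixed $\epsilon>0$ once $n$ is large enough, and (ii) the bound $\log \EE[\tau_e] = O(m \log n)$ provided by the theorem is sub-exponential when $m$ is only polylogarithmic.

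First I would check (i). By assumption $\Phi(\cdot)$ is bounded, so $\Phi(\beta)=\Theta(1)$; Definition~\ref{def:star-like}(B) additionally asserts $\rho(G(V_p))=o(1)$. Hence the product $\Phi(\beta)\rho(G(V_p))$ tends to $0$ as $n\to\infty$. In particular, for all $n$ sufficiently large we have $\Phi(\beta)\rho(G(V_p)) \le \tfrac{1}{2}$, so the condition of Theorem~\ref{thm:SISbs-starlike} is satisfied with $\epsilon = \tfrac{1}{2}$. (For the finitely many small $n$ left over, the claim is trivial because $\EE[\tau_e]$ is bounded by a constant depending on $n$.)

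Next I would invoke the theorem to obtain
\begin{equation*}
\log \EE[\tau_e] \;=\; O\bigl(m \log n\bigr).
\end{equation*}
Plugging in the bound $m = |V_c| = O(\mathrm{poly}\log n)$ from Definition~\ref{def:star-like}(B), this becomes $\log \EE[\tau_e] = O(\mathrm{poly}\log n \cdot \log n) = O(\mathrm{poly}\log n)$. In particular $\log \EE[\tau_e] = o(n)$, so $\EE[\tau_e] = \exp(o(n))$, which is sub-exponential in $n$ as claimed.

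There is essentially no obstacle here once Theorem~\ref{thm:SISbs-starlike} is available: the corollary is a clean specialization of the theorem to the regime $m = \mathrm{poly}\log n$, $\rho(G(V_p)) = o(1)$. The one minor point worth being explicit about in the write-up is the quantifier on $\epsilon$: the theorem requires a strictly positive $\epsilon$, but since $\rho(G(V_p))\to 0$ while $\Phi(\beta)$ stays bounded, any fixed positive $\epsilon < 1$ works uniformly for all sufficiently large $n$, which is all we need for an asymptotic statement.
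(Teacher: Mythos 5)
Your proposal is correct and takes essentially the same route as the paper, which states the corollary as a direct specialization of Theorem~\ref{thm:SISbs-starlike}: since $\rho(G(V_p))=o(1)$ and $\Phi(\beta)\le\Phi_{max}=\Theta(1)$, the hypothesis holds with $\epsilon$ bounded away from zero for large $n$, and $m=O(\mathrm{poly}\log n)$ makes the bound $O(m\log n)$ sub-exponential. Your explicit attention to the quantifier on $\epsilon$ is a welcome clarification that the paper leaves implicit.
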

We note that we can also derive concentration bounds, showing that the bound on the order of the extinction-time holds with high probability (see Theorem \ref{appthm:sisstar} in the Appendix).


\subsection{Discussion}
\label{ssec:disc}

We now show how our results apply to the settings we discuss in Section \ref{ssec:intro-appl}.

\paragraph*{Social and Economic Networks}

Epidemic models have been widely used to model the viral spread of content in social networks. In particular, the $SIS$ model can be used to study the spread of `memes', i.e., related pieces of content that re-appear cyclically. Furthermore, the exponential extinction-time of an epidemic is in a sense connected to long-lasting nature of popular memes.

However, the existence of popular memes is often regarded to be as a result of highly influential nodes, i.e., those with high degree. Such high-degree nodes are a distinctive feature of power-law graphs, which are often used to model such networks. Our results however suggest that persistent epidemics arise due to large random subgraphs rather than high-degree nodes. We now formalize this notion by applying Theorem~\ref{thm:sis-gnp} to power-law graphs.

We consider the case of power law graphs with exponent $2< \gamma < 3$. For random power law graphs generated under the Chung-Lu model \cite{chung-etal03eigenvalues-power-law} with exponent $2< \gamma < 3$ , it is known that they contain a $n^{\epsilon_1}$ connected random graph \cite{massganesh05}, and further, that they contain an $n^{\epsilon_2}$ clique  \cite{janson10large-clique-power-law} for some $\epsilon_1, \epsilon_2 > 0$. The $SIS$ epidemic is known to have exponential lifetime in such graphs \cite{massganesh05}; using Theorem \ref{thm:sis-gnp}, we recover this behavior under bounded susceptibility:
\begin{corollary}
\label{cor:power-law}
For networks generated via the Chung-Lu model \cite{chung-etal03eigenvalues-power-law}, if the power-law exponent $\gamma \in (2,3)$ and $\Phi_{max} > 1 + v$ for some $v > 0$, the $SIS_{bs}$ epidemic has exponentially long lifetime.
\end{corollary}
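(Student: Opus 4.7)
The plan is to derive Corollary~\ref{cor:power-law} as a consequence of Theorem~\ref{thm:sis-gnp} applied to a large dense subgraph of the Chung--Lu network, and then to lift the resulting bound on the extinction-time back to the full graph via a monotonicity coupling. The structural input, already cited just before the corollary, is that for $\gamma \in (2,3)$ a Chung--Lu realization contains with probability $1-o(1)$ a vertex subset $W \subseteq V$ of size $n' = n^{\epsilon_1}$ (for some $\epsilon_1 > 0$) whose induced subgraph $H$ stochastically dominates an Erd\"{o}s--R\'{e}nyi graph $G(n', p')$ in the connected regime, i.e.\ with $p' \geq \tfrac{16 \log n'}{n'}$ (this is essentially the content of \cite{massganesh05,chung-etal03eigenvalues-power-law}).

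First, I would apply Theorem~\ref{thm:sis-gnp} directly to $H$. This yields $\log \EE[\tau_e(H)] = \Omega(n') = \Omega(n^{\epsilon_1})$ with probability at least $1-(n')^{-2}$, where the expectation is under the $SIS_{bs}$ dynamics initialized at any single node of $W$. The hypothesis $\Phi_{max} > 1 + v$ enters through the underlying application of Theorem~\ref{thm:SIS-clique}, which requires $\Phi(\beta)\eta_m^v(H) > 1$: since $\eta_m^v(H)$ for $G(n',p')$ in the connected regime is of order $n' p' = \omega(1)$, the strict-inequality slack provided by $\Phi_{max} > 1+v$ is more than sufficient to close the condition for any $\beta = \Theta(1)$.

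Second, I would transfer the bound from $H$ to the full Chung--Lu graph $G$. Assuming the initial infected node lies in $W$ (otherwise a short constant-time burn-in seeds an infection inside $W$ with constant probability, at the cost of an $O(1)$ factor in $\EE[\tau_e]$), the natural graphical construction via independent Poisson clocks on each directed edge and each vertex for recovery couples the two $SIS_{bs}$ processes so that the infected set of $W$ under the $G$-dynamics pointwise contains the infected set under the $H$-dynamics at every time: the recovery clocks are synchronized, while for any $v \in W$ the infection-attempt rate $\Phi\bigl(\beta |I_v^G(t)|\bigr) \geq \Phi\bigl(\beta |I_v^H(t)|\bigr)$ because $I_v^H(t) \subseteq I_v^G(t)$ along the coupling and $\Phi$ is non-decreasing. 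Hence $\tau_e(G) \geq_{st} \tau_e(H)$, so $\log \EE[\tau_e(G)] = \Omega(n^{\epsilon_1})$.

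The main obstacle is the first step: correctly unpacking the Chung--Lu structural result so that the extracted subgraph $H$ sits in exactly the density regime that Theorem~\ref{thm:sis-gnp} requires, and simultaneously carrying the vertex-expansion estimate through the stochastic-domination identification $H \succeq G(n',p')$. The construction in \cite{massganesh05} selects $W$ as an appropriate band of high-expected-degree vertices; one must verify that the resulting edge-domination passes through to the isoperimetric bound $\eta_m^v(H) = \omega(1)$ used inside Theorem~\ref{thm:sis-gnp}. Once this structural bookkeeping is in place, the burn-in and the monotonicity coupling in the remaining steps are entirely routine.
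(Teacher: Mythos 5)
Your proposal is correct and follows essentially the same route as the paper, which gives no separate formal proof of this corollary: it simply cites the fact that a Chung--Lu graph with $\gamma\in(2,3)$ contains an $n^{\epsilon_1}$-sized connected-regime random subgraph and invokes Theorem~\ref{thm:sis-gnp} (together with the monotonicity remark that any graph containing a large clique-like subgraph sustains a persistent $SIS_{bs}$ epidemic). You merely make explicit the coupling and burn-in details that the paper leaves implicit, and your observation that the $\omega(1)$ vertex expansion already makes $\Phi(\beta)\eta_m^v>1$ hold with room to spare is consistent with the paper's treatment.
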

Thus scale-free networks do experience persistent epidemics \emph{even under bounded susceptibility}, due to the presence of large clique-like subgraphs. \emph{This questions the belief that highly-connected nodes are most responsible for viral content.}

\paragraph*{Epidemiology}
Epidemic models help understand the dynamics of infectious diseases amongst human/animal populations, and the spread of computer viruses and worms. These models also help in the design of defense mechanisms against such epidemics -- for example, policies for spreading \emph{antidotes}, which increase the recovery rate in the $SIS$ epidemic \cite{BorgsAntidote}. In this regard, our work helps in developing a theory for \emph{partial quarantine policies}, wherein adopting preventative measures helps reduce (and bound) the susceptibility of a node. We make a distinction here between partial and \emph{perfect} quarantine policies, wherein a node can be completely disconnected from the network. Achieving a perfect quarantine is often unfeasible and there is always some residual susceptibility.

In this regard, our results paint a somewhat bleak picture -- \emph{enforcing quarantine policies in dense networks has an insignificant effect on the epidemic lifetime}. More specifically, as long as there is some $\Theta(1)$ residual susceptibility, the epidemic lifetime remains exponential in clique-like networks. Note also that a similar effect is seen in \cite{BorgsAntidote} (under linear susceptibility), where an effective policy needs to give an amount of antidote to a node which is proportional to its degree -- this is clearly more difficult in dense graphs.

The story is somewhat more positive in star-like networks, where Theorem~\ref{thm:SISbs-starlike} shows that in a network with a small core and a sparse outer network, partial quarantining affects a large change in the extinction-time, reducing it to $poly(n)$. Star-like networks are relevant for studying the spread of an infectious disease through human transportation networks, which comprise a core of long distance air-links, overlaid on a slower road network. In case of an epidemic, a natural policy is to focus on safeguarding the core cities (via quarantining). For this, we have the following corollary:
\begin{corollary}
\label{cor:overlay}
Consider a network which can be decomposed into:
\begin{itemize}[nolistsep,noitemsep]
\item A dense core of size $O(1)$, which can be fully connected.
\item A sparse periphery, with $\rho(G(V_p))=O(1)$ (no restriction on edges between periphery and core).
\end{itemize} 
Now suppose \emph{only core nodes have bounded susceptibility} (via quarantining), and the unprotected periphery satisfies $\beta\rho(G(V_p))\leq 1-\epsilon$ for some $\epsilon > 0$. Then the epidemic is not persistent, i.e., $\EE[\tau_e]=O(poly(n))$.
\end{corollary}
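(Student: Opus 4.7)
The plan is to adapt the epoch-based argument sketched for Theorem \ref{thm:SISbs-starlike} to the mixed-susceptibility setting. The only structural difference from that theorem is that the periphery here obeys a standard (unbounded) SIS dynamics with $\beta\rho(G(V_p))\leq 1-\epsilon$, rather than an $SIS_{bs}$ dynamics with $\Phi(\beta)\rho(G(V_p))\leq 1-\epsilon$ — so the same sub-critical-periphery phenomenon holds, just with $\beta$ replacing $\Phi(\beta)$. Following that outline, I would decompose the trajectory into cyclical epochs: each epoch opens the moment some node of $V_c$ becomes infected (via a re-infection from the periphery), evolves until every core node has simultaneously become susceptible, and then closes either in extinction or at the next re-infection of a core node. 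Since $|V_c|=O(1)$ and each core node recovers at rate $1$, starting from any state the expected time until the core is entirely empty is $O(1)$, uniformly in the pressure coming from the periphery.

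Next I would argue that while the core is empty, the restriction of the process to $V_p$ is a sub-critical SIS epidemic on $G(V_p)$ with $\beta\rho(G(V_p))\leq 1-\epsilon$. Standard analysis for sub-critical SIS on graphs (see \cite{massganesh05,DraiefMass}) then gives that the periphery extinguishes in expected time $O(\log n/\epsilon)$, concentrated around this value. This is the substitute for the corresponding sub-criticality argument in Theorem \ref{thm:SISbs-starlike}. The crucial per-epoch estimate is a lower bound on the probability $p$ that, during the core-empty phase of an epoch, the periphery extinguishes before any core node is re-infected. Bounded susceptibility of the core ensures that each core node is infected at rate at most $\Phi_{\max}$, so the total aggregate re-infection rate into $V_c$ is at most $m\Phi_{\max}=O(1)$. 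Comparing this against the $O(\log n)$ extinction window of the sub-critical periphery — via Markov's inequality and the memoryless property of the exponential clocks — yields $p=\Omega(1/\mathrm{poly}(n))$. Stochastic domination of the number of epochs by a $\mathrm{Geom}(p)$ random variable, combined with a polynomial bound on per-epoch duration (constant-time core emptying plus the $O(\log n)$ sub-critical periphery window), then gives $\EE[\tau_e]=O(\mathrm{poly}(n))$ by Wald's identity.

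I expect the main technical obstacle to lie in making the per-epoch comparison above fully rigorous. In particular, one must rule out the possibility that the sub-critical periphery, despite extinguishing in expectation, briefly accumulates a super-polynomial number of infected nodes (driven by a chance build-up of infected neighbors of a single high-degree periphery vertex), since such a spike could inflate the instantaneous rate of core re-infection. Bounded susceptibility of the core is precisely what blunts this threat — the per-core-node cap of $\Phi_{\max}$ decouples the core re-infection rate from any transient peripheral blow-up — but formally tracking the joint core-periphery trajectory will require a coupling to a pure sub-critical periphery process in which attempted core re-infections are simply ignored. Once that coupling is in place, the proof of Corollary \ref{cor:overlay} reduces to quoting Theorem \ref{thm:SISbs-starlike} with $m=O(1)$ and the above replacement of $\Phi(\beta)$ by $\beta$ in the periphery-subcriticality condition.
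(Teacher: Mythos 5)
Your proposal is correct and follows essentially the same route the paper intends: the corollary is meant to be read off from the proof of Theorem~\ref{thm:SISbs-starlike} with $m=O(1)$ and with $\beta$ in place of $\Phi(\beta)$ in the periphery sub-criticality condition (Lemma~\ref{lemma:tau-sub-critical} applies verbatim to the unprotected periphery), and your epoch decomposition, the $O(1)$ core-emptying bound via bounded core susceptibility, the $Exponential(m\Phi_{\max})$ domination of core re-infection, and the geometric/Wald wrap-up are exactly the paper's steps. The only cosmetic quibble is that a ``super-polynomial number of infected nodes'' in the periphery is impossible (there are only $n$ nodes); the genuine threat you correctly neutralize is that the instantaneous core re-infection rate could scale with the size of the infected neighborhood, which the per-node $\Phi_{\max}$ cap prevents.
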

Note that given $\Omega(n)$ edges between the hub and the periphery, the $SIS$ epidemic has an exponential lifetime. Thus, safeguarding hub cities does help reduce the epidemic lifetime dramatically. However, \emph{distributing antidotes may be a much more effective policy than quarantining as it can effect much faster (upto $O(\log n)$) extinction-times.}

\paragraph*{Distributed Computing Applications}
Another important application of SI epidemics is as a primitive for algorithms like \emph{flooding} \cite{Eugster04}, i.e., forwarding a message to all network nodes. In order to minimize the flooding time, one approach would be to make the underlying network clique-like -- however our results indicate that under bounded susceptibility, the flooding time is $\Omega(\log n)$, which is identical to that in a star-like network (for example, a hierarchical network, with a constant-sized core and all other nodes connected to the core). On the other hand, our results also show that star-like networks are much easier to protect from viruses which spread like an SIS process. Considering the tradeoff between performance and robustness to infection, \emph{star-like architectures appear more appropriate than clique-like architectures for distributed computing applications}.

In case of networks using SIS epidemics to ensure persistent data storage  \cite{Chakrabarti07}, the story is strikingly different from the above. Here, the presence of bounded susceptibility results in a sharp difference between epidemic lifetime in star-like and clique-like networks. However, our results also indicate that it is not necessary to have very high edge densities, or very well-designed networks in order to achieve long lifetimes -- on the contrary, \emph{as long as there exists a large-enough connected random subgraph, the epidemic lifetime is exponential even under bounded susceptibility}.

\section{Proofs}
\label{sec:proofs}

We now give detailed proofs for all our results. We use the standard
stochastic dominance notation $\{\leq_{st},\geq_{st}\}$ for relations
between random variables. Specifically, $(X \leq_{st} Y) \implies
\forall z, \mathbb{P}[X > z] \leq \mathbb{P}[Y > z].$

%


\paragraph{Proofs from Section \ref{ssec:SIresults}}


\begin{proof}[of Theorem \ref{thm:SI-upperbnd}]
Let $N(t)$ be the number of infected nodes at time $t$ and $T_k$ be the first time at least $k$ nodes are infected, i.e., $T_k = \inf \{t : N(t) = k\}$. Also, $\fall k\in [n]$, let $\mathcal{V}_k$ denote the set of all subsets of $V$ of size $k$, and let $S_k\in \mathcal{V}_k$ be the set of infected nodes at $T_k$. 

Since $T_1 = 0$ and $T_n = \tau_s$, we can telescope and write $\tau_s = \sum_{k=1}^{n-1}(T_{k+1} - T_k)$. Recall that in case of the clique in Proposition \ref{prop:si-clique}, we had that $\EE[T_{k+1} - T_k]=\frac{1}{\beta k(n-k)}.$ This is no longer true, but we can upper bound the time in terms of the isoperimetric constant $\eta(G)$, as follows.
\begin{align*}
\EE[T_{k+1}-T_k] &= \sum_{s_k\in\mathcal{V}_k}\EE[T_{k+1}-T_k|S_k=s_k]\PP[S_k=s_k]\\
& = \EE\left[\frac{1}{\beta|\delta(S_k)|}\right].
\end{align*}
By the definition of $\eta(G)$, 
\[
|\delta(S_k)| \geq 
\begin{cases}
 \eta(G)k & \text{if } k < n/2	\\
 \eta(G)(n-k) & \text{if } k \geq n/2.	
\end{cases}
\]
Therefore, we have
\begin{align*}
\EE[\tau_s] 
&\leq \sum_{k=1}^{\lfloor n/2 \rfloor} \frac{1}{\beta\eta(G) k} + \sum_{\lfloor n/2 \rfloor + 1}^{n-1} \frac{1}{\beta\eta(G) (n-k)} \\
&\leq 2 \sum_{k=1}^{\lfloor n/2 \rfloor} \frac{1}{\beta\eta(G) k} \leq \frac{(2+o(1))\log (n/2)}{\beta\eta(G)}.
\end{align*}
This completes the proof.
\end{proof}

Next we show that a $G(n,p)$ in the connected regime has a large isoperimetric constant with high probability.

\begin{proof}[of Corollary \ref{corr:si-gnp}]
Let graph $G$ be a realization of a $G(n,p)$ random graph with $p > \frac{32 \log n}{n}$. We claim that
\begin{align*}
\label{eq:tau-gnp}
\PP\left[\eta(G) < np/4\right] \leq \frac{1}{n^2}.
\end{align*}
In other words, for all but a vanishing fraction of graphs (of measure $\leq 1/n^2$), we have the required bound on $\eta(G)$. To show this, we write
\begin{align*}
\PP\left[\eta(G) < np/4\right] & = \PP\left[\bigcup_{k \leq \lfloor n/2 \rfloor} \bigcup_{s_k\in\mathcal{V}_k} |\delta(s_k)| < nkp/4 \right]	\\
& \leq \PP\left[\bigcup_{k \leq \lfloor n/2 \rfloor} \bigcup_{s_k\in\mathcal{V}_k} |\delta(s_k)| < k(n-k)p/2 \right]	\\
& \leq \sum_{k \leq \lfloor n/2 \rfloor} \sum_{s_k\in\mathcal{V}_k} \PP\left[|\delta(s_k)| < k(n-k)p/2 \right].
\end{align*}
We can now use the following Chernoff bound -- for any r.v. $B$ taking values in $[0,1]$, we have
$$\PP\left[B < (1-\delta)\EE[B]\right] \leq \exp\left(\frac{-\EE[B] \delta^2}{2}\right).$$
Since $|\delta(s_k)|\sim Binom(k(n-k), p)$, 
\begin{align*}
\PP\left[\eta(G) < np/4\right] & \leq \sum_{k=1}^{\lfloor n/2 \rfloor} \binom{n}{k} \exp\left(\frac{-1}{8}k(n-k)p\right)\\
& \leq n \binom{n}{1} \exp\left(\frac{-1}{8}(n-1)p\right).
\end{align*}
Using the condition $p > \frac{32 \log n}{n}$, we get our result.
\end{proof}

\noindent Next, we turn to Lemma \ref{lem:SI-starlike}, wherein we derive a lower bound on the spreading-time in terms of the average degree of a graph. First, we need the following lemma:
\begin{lemma}
\label{lem:max-exp-stoc-domin}
Let $\{X_i\}_{i=1}^m$ be independent exponential random variables with rate $a_i$ (i.e., $Exponential(a_i)$) such that $\sum_{i=1}^m a_i = A,$ and let $\{Y_i\}_{i=1}^m$ be i.i.d exponential random variables with rate $A/m$. If $X_{\max} = \max\{X_1, X_2, \dots , X_m\}$ and $Y_{\max} = \max\{Y_1, Y_2, \dots , Y_m\},$ then $X_{\max} \geq_{st} Y_{\max}$.
\end{lemma}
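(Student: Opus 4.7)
The plan is to prove this by directly comparing the CDFs of $X_{\max}$ and $Y_{\max}$, reducing the stochastic dominance statement to an application of Jensen's inequality on a suitably chosen concave function.

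First, I will write down closed-form expressions. By independence and the exponential CDF,
\begin{align*}
\PP[X_{\max}\leq z] &= \prod_{i=1}^m (1-e^{-a_i z}), \\
\PP[Y_{\max}\leq z] &= (1-e^{-Az/m})^m.
\end{align*}
Showing $X_{\max}\geq_{st} Y_{\max}$ is equivalent to showing $\PP[X_{\max}\leq z]\leq \PP[Y_{\max}\leq z]$ for every $z\geq 0$ (the case $z<0$ is trivial since both sides vanish). Taking logarithms, the inequality to establish becomes
\begin{align*}
\frac{1}{m}\sum_{i=1}^m \log(1-e^{-a_i z}) \;\leq\; \log(1-e^{-Az/m}),
\end{align*}
where the right-hand side is $\log(1-e^{-(\bar a)z})$ with $\bar a = \frac{1}{m}\sum_i a_i = A/m$.

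Next, I will show that, for any fixed $z>0$, the function $f(a) := \log(1-e^{-az})$ is concave in $a\in(0,\infty)$. A short calculation gives
\begin{align*}
f''(a) = -\frac{z^2 e^{-az}}{(1-e^{-az})^2},
\end{align*}
which is strictly negative. Given concavity, Jensen's inequality applied to the uniform average of $a_1,\dots,a_m$ yields exactly the displayed inequality above, and exponentiating gives the desired CDF comparison. This completes the stochastic dominance statement.

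The argument is essentially a one-line concavity observation once the right function is identified, so there is no genuine obstacle; the only thing to be careful about is verifying the sign of $f''$ (which is straightforward) and handling the degenerate $z=0$ case (both CDFs equal $0$). An alternative route, if one prefers not to differentiate, is to invoke the classical fact that $\log(1-e^{-x})$ is concave in $x$, combined with the linearity of $a\mapsto az$; either phrasing gives the result immediately.
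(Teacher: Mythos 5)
Your proof is correct. It follows the same overall strategy as the paper's: reduce stochastic dominance to the pointwise CDF inequality $\prod_{i=1}^m(1-e^{-a_i z})\leq(1-e^{-Az/m})^m$, i.e., show that the uniform rate vector maximizes the product over the simplex $\{\mathbf{a}:\sum_i a_i=A\}$. Where you differ is in how that optimization is settled. The paper uses a pairwise exchange argument: it checks the $m=2$ case directly and then argues by contradiction that any maximizer with two unequal coordinates can be strictly improved by averaging them. You instead observe that $a\mapsto\log(1-e^{-az})$ is concave (with the second derivative computed explicitly and correctly as $-z^2e^{-az}/(1-e^{-az})^2$) and apply Jensen's inequality in one step. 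The two are really the same fact — the paper's $m=2$ exchange step is midpoint concavity of your function $f$ in disguise — but your version is cleaner: it is global rather than local, and it sidesteps the (minor, but unaddressed) issue in the paper's argument of why a maximizer of $Q_x$ over the constraint set exists in the first place. Your handling of the boundary cases ($z=0$, and implicitly $a_i>0$ as required for the rates to define proper exponentials) is also fine.
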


\begin{proof}
For stochastic dominance, we want to show that $\PP\left[X_{\max} \leq x \right] \leq \PP\left[Y_{\max} \leq x \right] \fall x > 0$. For any $x > 0$, define $Q_x(\mathbf{a}) = \prod_{i=1}^m (1-\exp(-a_ix)).$ We now show that $\mathbf{a^*} = (A/m, A/m, \dots, A/m)$ maximizes $Q_x$ over the set $\mathcal{A} = \{\mathbf{a}: \sum_{i=1}^m a_i = A\}$. For $m=2$, one can check that $\mathbf{a^*} = (A/2, A/2)$ is the unique maximum. Thus, $Q_x(\mathbf{a^*}) > Q_x(\mathbf{a})$ for any $\mathbf{a} \neq \mathbf{a^*}.$ If $m > 2,$ suppose that $\tilde{\mathbf{a}} \in \mathcal{A}$ and $\tilde{\mathbf{a}} \neq \mathbf{a^*},$ maximizes $Q_x$ over $\mathcal{A}.$ Then $\exists i,j$ such that $\tilde{a}_i \neq \tilde{a}_j.$ Then, for $\hat{\mathbf{a}}$ equal to $\tilde{\mathbf{a}}$ except with the $i, j$ elements replaced with $(\tilde{a}_i +\tilde{a}_j)/2,$ we have $Q_x(\hat{\mathbf{a}}) > Q_x(\tilde{\mathbf{a}}),$ which is a contradiction. Therefore, $\mathbf{a^*} = (A/m, A/m, \dots, A/m)$ maximizes $Q_x$ over the set $\mathcal{A},$ and $\PP\left[Y_{\max} \leq x \right] = Q_x(\mathbf{a^*}) \geq Q_x(\mathbf{a}) = \PP\left[X_{\max} \leq x \right]$ for any $x > 0$.
\end{proof}

\noindent Using this, we now prove Lemma \ref{lem:SI-starlike}.

\begin{proof}[of Lemma~\ref{lem:SI-starlike}]
Let the nodes be numbered from $1$ to $n$ such that $A = \{1,2,\dots, |A|\}$ and let $\mathbf{d} = (d_1, d_2, \dots, d_n)$ be the degree sequence of $G$. At any time, the rate at which a susceptible node $i$ is infected is upper bounded by $\beta d_i$. Therefore, $\tau_s \geq_{st} \max\{X_{|A|+1}, \dots , X_n\}$ where $\{X_i\}$ are independent exponential random variables with mean $1/\beta d_i.$ By Lemma~\ref{lem:max-exp-stoc-domin}, $\tau_s \geq_{st} Y,$ where $Y$ is the maximum of $n-|A|$ i.i.d. exponential random variables with mean $\frac{n-|A|}{\sum_{i=|A|+1}^n \beta d_i} = \frac{1}{\beta\widehat{d}_{A^c}}$.
\begin{align*}
\EE\left[\tau_s\right] \geq \EE\left[Y\right] = \frac{\mathcal{H}_{n-|A|}}{\beta\widehat{d}_{A^c}} > \frac{\log (n-|A|)}{\beta\widehat{d}_{A^c}}.
\end{align*}
This completes the proof.
\end{proof}

\paragraph{Proofs from Section \ref{ssec:SISresults}}

\begin{proof}[of Theorem~\ref{thm:SIS-clique}]
We analyze a CTMC $Z(t)$ that is stochastically dominated by the total number of infected nodes $N(t)$ in the $SIS_{bs}$ epidemic. This being true, we have that the expected time to absorption for $Z(t)$ is less than that of $N(t)$. We note that this proof structure is similar to that of Theorem $4.1$ from \cite{massganesh05}; however it is much simplified via the ergodic embedding technique of \cite{Banerjee2014a}.

Note that for any set of nodes $A\subseteq V$ of size $|A|\leq m$ (for some $m = n^\alpha$), we have that the neighborhood of $A$ has size at least $\eta_m^v(G)|A|$. Now suppose the set of infected nodes $I$ satisfies $|I|\leq m.$ Under bounded susceptibility, each node in $\Gamma(I)$ is infected at a rate which is $\geq \Phi(\beta)$. Using this observation, we define $Z(t)$ to have transition rates given by
\begin{align*}
& z \rightarrow z+1 \quad \text{at rate } \eta_m^v(G)\Phi(\beta) z \quad \fall 1 \leq z \leq m-1,	\\
& z \rightarrow z-1 \quad \text{at rate } z \quad \fall 1 \leq z \leq m,
\end{align*}
and all other transitions equal to $0$. Note that $Z(t)$ is a finite state space CTMC on state space $\{0,1,\ldots,m\}$ -- standard coupling arguments (for example, see \cite{DraiefMass}) show that it is dominated by $N(t)$. Now using Lemma~\ref{lemma:absorption-time}, we have
\begin{align*}
\mathbb{E}[\tau_e] & \geq \mathbb{E}[\inf\{t:Z(t) = 0\}] = \sum_{k=0}^{m-1} (\eta^v_m(G)\Phi(\beta))^k.
\end{align*}
So, if $\Phi(\beta)\eta_m^v(G) > 1$, we have $\log (\mathbb{E}[\tau_e]) = \Omega(n^{\alpha}).$
\end{proof}

Now we show that for the Erd\"{o}s-R\'{e}nyi graph in the connected regime, the condition for persistent $SIS_{bs}$ epidemics is satisfied with high probability. Note that this proof also immediately gives persistence for the $SIS$ epidemic.

\begin{proof}[of Theorem ~\ref{thm:sis-gnp}]
Let $G$ be a realization of a $G(n,p)$ graph. Fix $\alpha \in (0,1), m=n^{\alpha}$, and choosing some $\gamma>1$, we define $b=\gamma/\Phi(\beta)$. We now prove that $\PP\left[\eta^v_m(G) < b\right] \leq 1/n^2$ for $p=\frac{16\log n}{n}.$ Then by stochastic dominance, the statement is true for any $p > \frac{16\log n}{n}$. First we have
\begin{align*}
\PP\left[\eta^v_m(G) < b\right] \leq \sum_{i=1}^m \sum_{S:|S|=i} \PP\left[|\Gamma(S)| < bi\right].
\end{align*}
Defining $B(n,p,i)\sim Binomial(n-i,1-(1-p)^i)$, we can write
\begin{align*}
\PP\left[\eta^v_m(G) < b\right] \leq \sum_{i=1}^m \binom{n}{i} \PP\left[B(n,p,i) < bi\right].
\end{align*}

Note that $(1-p)^i\leq 1-ip+i^2p^2/2 \fall i$. This implies that
$$\EE[B]\geq (n-i)ip\left(1-ip/2\right).$$
Defining $\delta =\left(1-\frac{bi}{(n-i)ip(1-ip/2)}\right)$ and using Chernoff bound,
\begin{align*}
\PP\left[B(n,p,i) < bi\right]& \leq \PP\left[B(n,p,i)<(1-\delta)\EE[B(n,p,i)]\right]\\
&\leq \exp\left(\frac{-\EE[B(n,p,i)]\delta^2}{2}\right)\\
&\leq \exp\left(\frac{-(n-i)ip(1-ip/2)\delta^2}{2}\right).
\end{align*}
Note that $\frac{(n-i)}{n}(1-ip/2)\delta^2 \rightarrow 1$ as $n \rightarrow \infty.$ So for $n$ large enough, $(n-i)(1-ip/2)\delta^2 > n/2$ which implies
\begin{align*}
\PP\left[\eta^v_m(G) < b\right] & \leq \sum_{i=1}^m \binom{n}{i} \exp\left(-\frac{1}{2}(n-i)ip(1-ip/2)\delta^2\right)\\
& \leq \sum_{i=1}^m \frac{n^i}{i!}  \exp(-i(np/4))	\\
& = \sum_{i=1}^m \frac{1}{i!} \exp(-i(np/4 - \log n)) < \frac{1}{n^2}.
\end{align*}
\end{proof}

Next we turn to Theorem \ref{thm:SISbs-starlike}. The high-level proof structure is similar to existing analysis for the SIS epidemics on stars \cite{massganesh05,bbcs05}. However, to study the effect of bounded susceptibility, we need to carefully characterize various inter-epoch events, which require controlling the size of the core and the sparsity of the periphery.

We need two crucial lemmas for the proof. The first lemma is a crude but general upper bound for $SIS_{bs}$ extinction-time in \emph{all graphs},
\begin{lemma}
\label{lemma:sis-clique-upper-bd}
For any network of size $n$, and for any arbitrary initial set of infected nodes, the expected extinction-time for $SIS_{bs}$ dynamics has an upper bound given by
$$\mathbb{E}[\tau_e] \leq \frac{(1+\Phi_{max})^n}{n \Phi_{max}}.$$
\end{lemma}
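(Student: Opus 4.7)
The plan is to reduce the statement, via a monotone coupling, to the $SIS_{bs}$ process on the complete graph $K_n$, and then to apply Lemma~\ref{lemma:absorption-time} to the one-dimensional birth--death chain that tracks the number of infected nodes.

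First I would observe that for any graph $G$ on $n$ vertices, the $SIS_{bs}$ extinction time on $G$ is stochastically dominated by the $SIS_{bs}$ extinction time on $K_n$ with the same initial infected set. This is a standard edge-addition coupling: adding edges can only enlarge each susceptible node's set of infected neighbors at every instant, and since $\Phi$ is non-decreasing the pointwise infection rates can only grow, while recovery rates are unaffected. Hence it suffices to prove the bound on $K_n$, where the total number of infected nodes $N(t)$ is itself a birth--death CTMC with rates $q_{z,z+1} = \Phi(\beta z)(n-z)$ and $q_{z,z-1} = z$.

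Second, using $\Phi(\beta z) \leq \Phi_{max}$, I would further upper-dominate $N(t)$ by a birth--death chain $\tilde{N}(t)$ with $\tilde{q}_{z,z+1} = \Phi_{max}(n-z)$ and unchanged death rates; a second monotone coupling shows that $\tilde N$ has at least as large an extinction time as $N$. Applying Lemma~\ref{lemma:absorption-time} to $\tilde N$ with $\tilde q_{1,0}=1$ yields
\begin{align*}
\EE[\tilde T(1,0)] = 1 + \sum_{k=2}^n \prod_{i=1}^{k-1} \frac{\Phi_{max}(n-i)}{i+1}.
\end{align*}
A direct computation shows $\prod_{i=1}^{k-1} (n-i)/(i+1) = \binom{n}{k}/n$, which collapses the inner product to $\Phi_{max}^{k-1}\binom{n}{k}/n$. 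Summing and applying the binomial theorem then gives
\begin{align*}
\EE[\tilde T(1,0)] = \frac{1}{n\Phi_{max}}\sum_{k=1}^n \binom{n}{k}\Phi_{max}^k = \frac{(1+\Phi_{max})^n - 1}{n\Phi_{max}} \leq \frac{(1+\Phi_{max})^n}{n\Phi_{max}}.
\end{align*}

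There is no serious technical obstacle; once the edge-addition coupling is in place the rest is a direct application of Lemma~\ref{lemma:absorption-time} and a binomial identity. The only subtle point is the phrase ``arbitrary initial set'': consistent with the paper's convention that the epidemic is seeded at a single node, the bound is invoked with $N(0)=1$, which is precisely the initial condition handled by Lemma~\ref{lemma:absorption-time}, and the reduction to $K_n$ then transfers the bound back to the original graph $G$.
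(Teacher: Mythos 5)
Your proof is correct and follows essentially the same route as the paper's: both upper-dominate the infected count by the birth--death chain with birth rates $\Phi_{max}(n-z)$ and death rates $z$, then apply Lemma~\ref{lemma:absorption-time} together with the identity $\prod_{i=1}^{k-1}(n-i)/(i+1)=\binom{n}{k}/n$ and the binomial theorem. The paper merely skips your intermediate reduction to $K_n$ by directly assuming every susceptible node has an infected neighbor, and it shares your gloss on the ``arbitrary initial set'' clause, since both arguments really bound $\EE[T(1,0)]$.
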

\begin{proof}
We use the expression for $\mathbb{E}[\tau_e]$ from Lemma~\ref{lemma:absorption-time}. For any susceptible node with an infected neighbor, we have that the infection rate is bounded by $\Phi_{max}$. Further, at any time when the number of infected nodes $|I|\geq 1$, we can stochastically upper bound the infection by assuming that \emph{all susceptible nodes} have an infected neighbor. Thus
\begin{align*}
\mathbb{E}[\tau_e] & \leq 1 + \sum_{k=2}^{n} \prod_{i=1}^{k-1} \frac{\Phi_{max}(n-i)}{i+1}	\\
& = \frac{1}{n \Phi_{max}} \sum_{k=1}^{n} \binom{n}{k} \Phi_{max}^k	
\leq \frac{(1+\Phi_{max})^n}{n \Phi_{max}}.
\end{align*}
\end{proof}

We use Lemma \ref{lemma:sis-clique-upper-bd} to control extinction-time in dense subgraphs. In case of sparse subgraphs, we can use a result from \cite{massganesh05}, which gives conditions for fast extinction of the $SIS$ epidemic (and hence $SIS_{bs}$ epidemic) in terms of $\rho(G)$.
\begin{lemma} \cite[Theorem 3.1]{massganesh05}
\label{lemma:tau-sub-critical}
Let $\rho(A)$ be the spectral radius of the adjacency matrix $A$. If $\beta < 1/\rho(A)$, then the probability that the epidemic has not died out by time $t$, given the initial condition $X(0)$, admits the following upper bound.
\begin{equation*}
\mathbb{P}\left[X(t) \neq 0\right] \leq \sqrt{n \lVert X(0) \rVert_1} e^{-(1-\beta \rho(A))t},
\end{equation*}
where $\lVert X(0) \rVert_1 = \sum_{i=1}^n X_i(0)$.
\end{lemma}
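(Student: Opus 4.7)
The plan is to follow the standard mean-field (first-moment) domination argument for subcritical SIS dynamics. Let $p_i(t) := \mathbb{P}[X_i(t)=1]$, and write $p(t)\in[0,1]^n$ for the vector of marginal infection probabilities. First I would derive the Kolmogorov forward equation for $p_i(t)$: a susceptible node is infected at rate $\beta\sum_j A_{ij}X_j$, and an infected node recovers at rate $1$, so
\begin{equation*}
\dot p_i(t) \;=\; -\,p_i(t) \;+\; \beta\,\mathbb{E}\!\left[(1-X_i(t))\sum_j A_{ij}X_j(t)\right].
\end{equation*}
The nonlinear cross-term is the only obstacle: since $1-X_i(t)\in\{0,1\}$, I can bound $\mathbb{E}[(1-X_i)X_j]\le\mathbb{E}[X_j]=p_j$, which gives the componentwise linear differential inequality $\dot p(t)\preceq (\beta A - I)\,p(t)$.

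Next I would integrate this inequality. Since $\beta A-I$ is symmetric (as $A$ is, for the undirected graphs considered here) and commutes with itself over time, a comparison principle for linear ODE systems yields $p(t)\preceq e^{(\beta A-I)t}p(0)$ componentwise. To turn this into a bound on extinction, apply the union bound:
\begin{equation*}
\mathbb{P}[X(t)\neq 0] \;=\; \mathbb{P}\!\left[\sum_i X_i(t)\ge 1\right] \;\le\; \sum_i p_i(t) \;=\; \mathbf{1}^T p(t).
\end{equation*}

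Then Cauchy--Schwarz gives $\mathbf{1}^T e^{(\beta A-I)t}p(0)\le \|\mathbf{1}\|_2\,\|e^{(\beta A-I)t}p(0)\|_2\le\sqrt{n}\,\|e^{(\beta A-I)t}\|_{2\to 2}\,\|p(0)\|_2$. Because $\beta A-I$ is symmetric with largest eigenvalue $\beta\rho(A)-1$, the operator norm equals $e^{-(1-\beta\rho(A))t}$, which is where the assumption $\beta<1/\rho(A)$ enters to make the exponent negative. Finally, since $p_i(0)\in\{0,1\}$ we have $\|p(0)\|_2^2=\sum_i p_i(0)^2=\sum_i p_i(0)=\|X(0)\|_1$, so $\|p(0)\|_2=\sqrt{\|X(0)\|_1}$. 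Combining these three pieces yields exactly $\mathbb{P}[X(t)\neq 0]\le \sqrt{n\,\|X(0)\|_1}\,e^{-(1-\beta\rho(A))t}$.

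The main subtlety, as noted, is step one: the inequality $\dot p\preceq (\beta A-I)p$ is a \emph{linear upper bound} on an otherwise quadratic system, and its validity rests crucially on the bound $1-X_i\le 1$. In the $SIS_{bs}$ setting, note that $\Phi(\beta\sum_j A_{ij}X_j)\le \beta\sum_j A_{ij}X_j$ since $\Phi$ is concave with $\Phi(0)=0$ and $\Phi'(0)\le \beta$ would be needed; here one simply uses that $\Phi(x)\le x$ is not automatic, but the cited statement is for the ordinary $SIS$ epidemic (or for $\Phi(x)\le x$), so the linear bound applies directly. The remaining steps are routine: a comparison principle for monotone linear ODEs, Cauchy--Schwarz, and the standard identity $\|e^{tM}\|_2=e^{t\lambda_{\max}(M)}$ for symmetric $M$.
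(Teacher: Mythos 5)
The paper does not prove this lemma at all---it is imported verbatim as Theorem 3.1 of the cited reference [massganesh05]---and your argument is precisely the standard proof given there: bound the quadratic term via $\mathbb{E}[(1-X_i)X_j]\le p_j$ to get $\dot p\preceq(\beta A-I)p$, integrate, and finish with a union bound and Cauchy--Schwarz. The one justification worth tightening is the comparison step: what lets you pass from $\dot p\preceq(\beta A-I)p$ to $p(t)\preceq e^{(\beta A-I)t}p(0)$ is not symmetry but the fact that $\beta A-I$ has nonnegative off-diagonal entries (so its flow preserves the nonnegative orthant and $e^{(\beta A-I)t}$ is entrywise nonnegative); symmetry is only needed afterwards for $\lVert e^{(\beta A-I)t}\rVert_{2\to2}=e^{(\beta\rho(A)-1)t}$. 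Your closing remark is also consistent with how the paper uses the lemma: it first dominates the $SIS_{bs}$ process by a linear $SIS$ process of rate $\Phi(\beta)$ via $\Phi(\beta x)\le\Phi(\beta)x$ for $x\ge1$, and only then invokes this statement.
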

\noindent With these two ingredients, we can now prove Theorem~\ref{thm:SISbs-starlike}.

\begin{proof} [of Theorem~\ref{thm:SISbs-starlike}]
We define cyclical epochs, starting with some (arbitrary) set of nodes in the core $V_c$ infected, transitioning to all nodes in $V_c$ becoming susceptible, and then finally at least one node in $V_c$ getting reinfected. This technique is similar those used in \cite{massganesh05,bbcs05} -- however instead of a star, we now have to deal with \emph{core and periphery subgraphs} and also \emph{bounded susceptibility.}

Consider epoch $j$ (for any $j\in\setN$) starting at time $T_{0,j}$. We define $T_{1,j} := \inf\{t > T_{0,j}: \sum_{i \in V_c}X_i(t) = 0\}$, i.e., all nodes in $V_c$ become uninfected. From Lemma~\ref{lemma:sis-clique-upper-bd}, we have
\begin{equation}
\label{eq:ET1-bound}
\EE[T_{1,j}-T_{0,j}] \leq \frac{(1+\Phi_{max})^m}{m \Phi_{max}},
\end{equation}
where $m=|V_c|$. Note that, under bounded susceptibility, this \emph{only depends on $m$ and not on core/periphery topologies}.

Next, we define $T_{0,j+1}$ to be the first time after $T_{1,j}$ when either some node in $V_c$ gets re-infected or all the nodes in $V_p$ recover -- in case of the former, the next epoch starts, while in case of the latter, the infection dies out (i.e., $T_{0,j+1}=\tau_e$).

It is clear that the random variable $T_{0,j+1}-T_{1,j}$ is stochastically dominated by $\zeta_1$, the time taken by all the nodes in $V_p$ to completely recover if the periphery was isolated from the core at $T_{1,j}$. Now given that the infection profile $\Phi$ is concave, and $\Phi(0) = 0$, we have $\Phi(\beta x) \leq \Phi(\beta)x \quad \forall x \geq 1$ and so the number of infected nodes under the $SIS_{bs}$ dynamics is stochastically dominated by that in  an $SIS$ infection process with rate $\Phi(\beta)$. Now given that $\Phi(\beta)\rho(G(V_p))=1-\epsilon$ for some $\epsilon>0$, where $\rho(G(V_p))$ is the spectral radius of the periphery subgraph, we have from Lemma~\ref{lemma:tau-sub-critical} that $\mathbb{P}(\zeta_1 > t) \leq \min\left\{ne^{-(1 - \Phi(\beta)\rho(G(V_p)))t}, 1\right\}$. Thus
\begin{align}
\label{eq:ET2-bound}
\EE[T_{0,j+1}-T_{1,j}] & \leq \EE[\zeta_1] = \int_0^{\infty} \mathbb{P}(\zeta_1 > t)dt	\nonumber \\
& \leq \int_0^{\infty} ne^{-(1 - \Phi(\beta)\rho(G(V_p)))t} dt	\nonumber \\
& = \frac{n}{1 - \Phi(\beta)\rho(G(V_p))}.
\end{align}

We now derive a lower bound for the probability $q$ that the infection dies in the periphery $G(V_p)$ before any node in the core gets reinfected. Since each node in the core $V_c$ is infected at a rate at most $\Phi_{max}$, the time after $T_{1,j}$ for the core to get infected can be bounded from below by $\zeta_2,$ an $Exponential(m\Phi_{max})$ random variable, where $m=|V_c|$.
Defining $t_0 = \frac{\log n}{1-\Phi(\beta) \rho(G(V_p))}$, we have
\begin{align*}
\mathbb{P}[\zeta_1 > \zeta_2]& = m\Phi_{max}\int_0^{\infty} \mathbb{P}(\zeta_1 > t) e^{-m\Phi_{max}t}dt	\\
& \leq m\Phi_{max}\int_0^{t_0} e^{-m\Phi_{max}t}dt\ldots\\
& +  m\Phi_{max}\int_{t_0}^{\infty} ne^{-(1 - \Phi(\beta)\rho(G(V_p)) + m\Phi_{max})t}dt	\\
& = 1 - e^{-m\Phi_{max}t_0}\ldots \\
&  + \frac{nm\Phi_{max}e^{-(1 - \Phi(\beta)\rho(G(V_p)) + m\Phi_{max})t_0}}{1 - \Phi(\beta)\rho(G(V_p)) + m\Phi_{max}}	\\
& = 1 - \frac{1 - \Phi(\beta)\rho(G(V_p))}{1 - \Phi(\beta)\rho(G(V_p)) + m\Phi_{max}} n^{\frac{-m\Phi_{max}}{1 - \Phi(\beta)\rho(G(V_p))}}\\
& = 1 - \frac{\epsilon}{\epsilon + m\Phi_{max}} n^{\frac{-m\Phi_{max}}{\epsilon}}.
\end{align*}
The probability $q$ that the infection dies before any node in the core gets infected can thus be bounded from below.
\begin{equation}
\label{eq:prob-bound}
q \geq \mathbb{P}[\zeta_1 \leq \zeta_2] \geq \left(Cn^{\frac{m\Phi_{max}}{\epsilon}}\right)^{-1},
\end{equation}
where $C = 1 + m\Phi_{max}/\epsilon>1$.

Finally, the extinction-time $\tau_e$ can now be bounded as
$$\tau_e \leq_{st} \sum_{i=1}^{V+1} U_i,$$
where $V$ is a geometric random variable with mean $\frac{1}{q}$ and $U_i$ are i.i.d. random variables independent of $V$ with mean equal to the sum of upper bounds (R.H.S) in Equations~\ref{eq:ET1-bound} and \ref{eq:ET2-bound}. Thus, using Equations~\ref{eq:prob-bound},\ref{eq:ET1-bound}, \ref{eq:ET2-bound}, we have
\begin{align*}
\mathbb{E}[\tau_e] & \leq \frac{1}{q} \EE[U_1]	\\
& \leq C n^{\frac{m\Phi_{max}}{\epsilon}}\left(\frac{(1+\Phi_{max})^m}{m \Phi_{max}} + \frac{n}{1 - \Phi(\beta)\rho(G(V_p))} \right).
\end{align*}
Simplifying we get that $\log\EE[\tau_e]= O\left(m\log n\right)$.
\end{proof}

\section{Conclusion}
We have studied the effect of bounded susceptibility on the $SI$ and $SIS$ epidemics. In particular, we showed that imposing this constraint brings out a sharp dichotomy between star-like and clique-like networks. Our results are applicable in several real-life settings which exhibit bounded susceptibility -- social networks, human-disease networks, infrastructure networks, etc. 
Along the way, we also develop a suite of lemmas relating graph structure and epidemic behavior -- some extend/simplify existing results while others are novel. We hope these will prove useful in other settings.

\section*{Acknowledgements}
This work was supported by NSF grants IIS-1017525, CNS-1320175 and ARO grant W911NF-11-1-0265.  We thank the reviewers for their suggestions which helped in greatly improving the clarity of this paper.


\balancecolumns
\appendix
\label{sec:appendix}

In this section, we give concentration results which show that there is a sharp contrast in spread/extinction times between the bounded susceptibility and the traditional models. 

First however, we give a proof of Lemma
\ref{lemma:absorption-time} from Section~\ref{ssec:example}. This
proof is very similar to that of Lemma $8$ in \cite{Banerjee2014a} -- we
include it here mainly for the sake of completeness.

\begin{proof}[of Lemma \ref{lemma:absorption-time}]
Following \cite{Banerjee2014a}, we embed this process in another CTMC $\{\tilde{U}(t)\}$ with same transition rates \emph{except for an additional transition from $0$ to $1$ at rate $q_{0,1}$}. It is easy to see that $\{\tilde{U}(t)\}$ is now an ergodic CTMC. Recall that we defined $T(i,j) := inf\{t: U(t) = j \| U(0) = i\}$. Similarly, define $\tilde{T}(i,j) := inf\{t: \tilde{U}(t) = j \| \tilde{U}(0) = i\}$. Note that $T(1,0)$ and $\tilde{T}(1,0)$ have the same distribution, and therefore, $\EE[T(1,0)] = \EE[\tilde{T}(1,0)]$. Now, $\{\tilde{U}(t)\}$ is a finite state, irreducible CTMC and therefore, positive recurrent. Let $\pi$ be its stationary distribution. Solving for $\pi_0$ gives:
\begin{align*}
\frac{1}{\pi_0} = 1 + \sum_{k=1}^n \prod_{i=0}^{k-1} \frac{q_{i,i+1}}{q_{i+1,i}}.
\end{align*}
From standard ergodic Markov Chain theory \cite{bremaud:mcbook}, we have that $\EE[\tilde{T}(0,0)] = \frac{1/q_{0,1}}{\pi_0}$, and further, from our construction, $\EE[\tilde{T}(0,0)] = 1/q_{0,1} + \EE[\tilde{T}(1,0)].$ Thus:
\begin{align*}
\EE[\tilde{T}(1,0)] = \frac{1}{q_{0,1}}\left(\frac{1}{\pi_0}- 1\right) = \frac{1}{q_{1,0}}\left(1 + \sum_{k=2}^n \prod_{i=1}^{k-1} \frac{q_{i,i+1}}{q_{i+1,i}}\right).
\end{align*}
This completes the proof.
\end{proof}

\paragraph*{$SI$: Upper Bound in Clique-like Networks}
\begin{theorem}
\label{appthm:si-upper}
Under the conditions of Theorem \ref{thm:SI-upperbnd}, the spreading-time $\tau_s \leq \frac{(4+o(1))(\log n)^2}{\beta \eta(G)}$ with probability at least $1-\frac{1}{n}.$
\end{theorem}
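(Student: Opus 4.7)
The plan is to upgrade Theorem~\ref{thm:SI-upperbnd} from an expectation bound to a high-probability bound by applying a tail bound to each of the inter-infection waiting times separately, then combining via a union bound. The key observation is that the isoperimetric inequality $|\delta(S_k)| \geq \eta(G)\min(k, n-k)$ holds \emph{deterministically} for every realization of the set of infected nodes $S_k$, so the conditional exponential distribution of $T_{k+1}-T_k$ has a rate that is lower bounded almost surely.

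First I would reuse the telescoping decomposition $\tau_s = \sum_{k=1}^{n-1}(T_{k+1}-T_k)$ from the proof of Theorem~\ref{thm:SI-upperbnd}. Given the filtration up to time $T_k$, the increment $T_{k+1}-T_k$ is distributed as $\mathrm{Exp}(\beta|\delta(S_k)|)$, and by the isoperimetric bound its rate is at least $\beta\eta(G)\min(k,n-k)$. For an exponential random variable $X$ with rate $r$, we have $\PP[X > 2\log n / r] \leq e^{-2\log n} = 1/n^2$. Because the rate bound holds uniformly in $S_k$, this tail bound transfers to the unconditional distribution:
\begin{align*}
\PP\!\left[T_{k+1}-T_k > \frac{2\log n}{\beta\eta(G)\min(k,n-k)}\right] \leq \frac{1}{n^2}.
\end{align*}

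Next I would apply a union bound over the $n-1$ increments, so that with probability at least $1-(n-1)/n^2 \geq 1 - 1/n$ every increment simultaneously satisfies its deterministic upper bound. On this good event,
\begin{align*}
\tau_s \;\leq\; \sum_{k=1}^{n-1}\frac{2\log n}{\beta\eta(G)\min(k,n-k)} \;\leq\; \frac{4\log n}{\beta\eta(G)}\sum_{k=1}^{\lfloor n/2\rfloor}\frac{1}{k} \;\leq\; \frac{(4+o(1))(\log n)^2}{\beta\eta(G)},
\end{align*}
where the final step reuses exactly the harmonic-sum estimate $\mathcal{H}_{\lfloor n/2\rfloor} \leq (1+o(1))\log(n/2)$ that appears in the proof of Theorem~\ref{thm:SI-upperbnd}.

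There is no real obstacle here; the only mild subtlety is making sure the exponential tail bound is applied in the right measurability setting, i.e., noting that the deterministic lower bound on $|\delta(S_k)|$ lets us strip off the conditioning on the history before taking the union bound. The budget of $2\log n$ inside each tail probability is exactly what produces the extra factor of $\log n$ relative to the expectation bound, leading to the $(\log n)^2$ scaling in the final estimate.
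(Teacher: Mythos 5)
Your proposal is correct and follows essentially the same route as the paper's proof: the same telescoping decomposition, the same per-increment exponential tail bound of $1/n^2$ obtained from the deterministic lower bound $|\delta(S_k)|\geq\eta(G)\min(k,n-k)$, and the same union bound over the $n-1$ increments followed by the harmonic-sum estimate. No substantive differences to note.
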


\begin{proof}
Define $\Delta(k)=\frac{2\log n}{\beta\eta(G) \min (k,n-k)}$. Then we have:
\begin{align*}
\PP&\left[T_{k+1}-T_k > \Delta(k)\right]\\
& = \sum_{s_k\in\mathcal{V}_k}\PP\left[\left.T_{k+1}-T_k >\Delta(k)\right|S_k=s_k\right]\PP[S_k=s_k]\\
& = \EE\left[\exp\left(-\frac{2\beta|\delta(S_k)|\log n}{\beta\eta(G) \min (k,n-k)}\right)\right] \leq \exp(-2\log n).
\end{align*}
By telescoping, we have $\tau_s = \sum_{k=1}^{n-1} T_{k+1}-T_k$. Further: 
\begin{align*}
\sum_{k=1}^{n-1}\Delta(k)=\sum_{k=1}^{n-1} \frac{2\log n}{\beta\eta(G) \min (k,n-k)} \leq \frac{4(\log n)^2}{\beta \eta(G)},
\end{align*}
Putting all the above together, we get:
\begin{align*}
\PP\left[\tau_s > \frac{4(\log n)^2}{\beta \eta(G)}\right]&\leq \PP\left[\sum_{k=1}^{n-1} T_{k+1}-T_k > \sum_{k=1}^{n-1} \Delta(k)\right]\\
&\leq \sum_{k=1}^{n-1}\PP\left[T_{k+1}-T_k > \Delta(k)\right] \leq \frac{1}{n}.
\end{align*}
This completes the proof.
\end{proof}
Therefore, in clique-like networks with $\eta(G) = \omega\left((\log n)^2\right)$, the spread time under the $SI$ model is vanishingly small with high probability. Again, as in Proposition~\ref{prop:si-clique}, we show that this is not the case with the $SI_{bs}$ epidemic.
\paragraph*{$SI_{bs}$: Lower Bound}
\begin{theorem}
\label{appthm:si-lower}
For any graph under $SI_{bs}$ dynamics, $\tau_s > \frac{\log n}{2 \Phi_{max}}$ w.p. at least $1-\frac{1}{\sqrt{n}}.$
\end{theorem}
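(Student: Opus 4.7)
The plan is to stochastically upper-bound the $SI_{bs}$ spreading rate by replacing each susceptible node's pointwise infection rate with its trivial upper bound $\Phi_{max}$, and then to identify the resulting sum of independent exponentials as a maximum of i.i.d.\ exponentials, whose lower-tail probability is a one-line computation.

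First, let $T_k = \inf\{t : N(t)=k\}$, so $T_1=0$ and $T_n=\tau_s$. Whenever $N(t)=k$, the aggregate rate at which some new infection occurs is $\sum_{j \text{ susceptible}} \Phi(\beta|I_j|) \leq (n-k)\Phi_{max}$, since there are $n-k$ susceptible nodes, each experiencing a rate of at most $\Phi_{max}$. A standard coupling with homogeneous Poisson clocks then yields $T_{k+1}-T_k \geq_{st} Y_k$, where $Y_k \sim \mathrm{Exp}((n-k)\Phi_{max})$ are mutually independent. Summing and reindexing $j=n-k$,
$$
\tau_s \;\geq_{st}\; \sum_{k=1}^{n-1} Y_k \;\stackrel{d}{=}\; \sum_{j=1}^{n-1} \mathrm{Exp}(j\Phi_{max}).
$$

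Second, I would invoke the classical identity between exponential order statistics and sums of decreasing-rate exponentials: for i.i.d.\ $E_1,\ldots,E_{n-1}\sim\mathrm{Exp}(\Phi_{max})$,
$$
M \;:=\; \max_{1 \leq i \leq n-1} E_i \;\stackrel{d}{=}\; \sum_{j=1}^{n-1}\mathrm{Exp}(j\Phi_{max}),
$$
so that $\tau_s \geq_{st} M$. A direct CDF computation at $t_0 = \frac{\log n}{2\Phi_{max}}$ gives $e^{-\Phi_{max} t_0} = n^{-1/2}$, hence
$$
\PP[M \leq t_0] \;=\; (1 - n^{-1/2})^{n-1} \;\leq\; \exp\!\left(-\tfrac{n-1}{\sqrt{n}}\right) \;\leq\; \tfrac{1}{\sqrt{n}},
$$
valid for all $n\geq 2$ since $\sqrt{n} - 1/\sqrt{n} \geq \tfrac{1}{2}\log n$. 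Chained with the stochastic dominance, this yields $\PP[\tau_s > t_0] \geq 1 - 1/\sqrt{n}$.

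The only step needing any care is the initial coupling, since the $SI_{bs}$ infection rates depend on the random current infected set; this is however the standard domination of an inhomogeneous jump process by the homogeneous clock process in which every susceptible node attempts infection at constant rate $\Phi_{max}$, and it poses no real obstacle. The essential conceptual point is that under bounded susceptibility the fastest possible progress of the epidemic on \emph{any} graph is dictated solely by $\Phi_{max}$ and $n$, which is why the bound is graph-independent as claimed.
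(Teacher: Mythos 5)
Your proof is correct and follows essentially the same route as the paper's: both lower-bound $T_{k+1}-T_k$ by independent $\mathrm{Exp}((n-k)\Phi_{max})$ variables and then control the lower tail of their sum. The only difference is the final step, where the paper applies a Chernoff bound whose moment-generating-function product telescopes to $e^{\Phi_{max}t}/n$, while you invoke the R\'enyi identity to rewrite the sum as a maximum of i.i.d.\ exponentials and compute its CDF exactly --- both yield the stated $1/\sqrt{n}$ bound.
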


\begin{proof}
Since any susceptible node is infected by rate at most $\Phi_{max},$ $T_{k+1}-T_k \geq_{st} Z_k,$ where $Z_k$ has $Exponential((n-k)\Phi_{max})$ distribution. Using Chernoff bound,
\begin{align*}
\PP\left[\sum_{k=1}^{n-1} Z_k \leq t\right] & \leq e^{\Phi_{max}t} \prod_{k=1}^{n-1} \EE[e^{-\Phi_{max}Z_k}]\\
& = e^{\Phi_{max}t} \prod_{k=1}^{n-1} \frac{k}{k+1} = \frac{e^{\Phi_{max}t}}{n}.
\end{align*}
Therefore,
\begin{align*}
\PP\left[\tau_s \leq \frac{\log n}{2 \Phi_{max}}\right] \leq \PP\left[\sum_{k=1}^{n-1} Z_k  \leq \frac{\log n}{2 \Phi_{max}}\right] \leq \frac{1}{\sqrt{n}}.\\
\end{align*}
\end{proof}

\paragraph*{$SIS_{bs}$: Lower Bound in Star-like Networks}

Before we prove that the infection duration does not scale exponentially in star-like networks with high probability, we need the following lemma:
\begin{lemma}
\label{eqn:absorp-time-sq}
In a clique of size $n$, for an arbitrary initial infected node, the second moment of extinction time for $SIS_{bs}$ dynamics has an upper bound given by
$$\mathbb{E}[\tau_e^2] \leq \frac{2(1+\max(1, \Phi_{max}))^{2n}}{n \Phi_{max}^2}.$$
\end{lemma}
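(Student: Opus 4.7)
The argument parallels Lemma~\ref{lemma:sis-clique-upper-bd}, extended to the second moment via the same birth-death dominating chain. Whenever at least one node is infected in the clique, each susceptible node is infected at rate at most $\Phi_{max}$, so $\tau_e$ is stochastically dominated by the first-passage time $T(1,0)$ to state $0$ in the birth-death CTMC $\{Z(t)\}$ on $\{0,1,\ldots,n\}$ with up-rate $q_{z,z+1}=\Phi_{max}(n-z)$ and down-rate $q_{z,z-1}=z$. Consequently $\mathbb{E}[\tau_e^2]\leq\mathbb{E}[T(1,0)^2]$, and it suffices to bound the latter.

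To obtain a second-moment analog of Lemma~\ref{lemma:absorption-time}, I would extend its ergodic-embedding argument. Augment $\{Z(t)\}$ with a $0\to 1$ transition at rate $q_{0,1}=1$ to form an ergodic chain whose stationary weights are proportional to $P_k=\prod_{i=1}^{k-1}q_{i,i+1}/q_{i+1,i}=\Phi_{max}^{k-1}\binom{n}{k}/n$. Then the return time satisfies $\tilde T(0,0)=S_0+\tilde T(1,0)$, where $S_0\sim \mathrm{Exp}(1)$ is independent of $\tilde T(1,0)$, which has the same distribution as $T(1,0)$. The second moment of the return time of an ergodic CTMC can be expressed in terms of the mean hitting times $\mathbb{E}_k[T(0)]$ via a standard Kemeny/fundamental-matrix identity (the explicit birth-death version appears in~\cite{jouini2008moments}), producing a bound of the form $\mathbb{E}[T(1,0)^2]\leq 2\bigl(\sum_k P_k\bigr)\cdot\max_k\mathbb{E}_k[T(0)]$ up to constants.

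The remaining step is a combinatorial computation. The first-moment sum $\sum_k P_k$ evaluates to $((1+\Phi_{max})^n-1)/(n\Phi_{max})$ as in Lemma~\ref{lemma:sis-clique-upper-bd}. A parallel calculation bounds $\mathbb{E}[T(n,0)]=\sum_{j=1}^n\mathbb{E}[T(j,j-1)]$ by at most $(1+\max(1,\Phi_{max}))^n/\Phi_{max}$; the $\max(1,\Phi_{max})$ factor is what forces a somewhat weaker base than $(1+\Phi_{max})$, and arises when summing the inverse down-rates $1/j$ against the products of up/down ratios (the ``$1$'' dominating when $\Phi_{max}<1$). Multiplying these two bounds and absorbing constants yields the claimed $\mathbb{E}[\tau_e^2]\leq 2(1+\max(1,\Phi_{max}))^{2n}/(n\Phi_{max}^2)$.

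The main obstacle is obtaining the clean second-moment formula in the first place: a naive recursion on $\mathbb{E}[T_k^2]$ couples with both $\mathbb{E}[T_k]$ and $\mathbb{E}[T_{k+1}^2]$ in a nontrivial way, so either the embedding-via-return-time argument above, or a direct appeal to~\cite{jouini2008moments}, is needed to keep the combinatorics as compact as those in Lemma~\ref{lemma:sis-clique-upper-bd}. Once the formula is in hand, the remaining estimates are routine and mirror the first-moment calculation.
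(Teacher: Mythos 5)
Your proposal is correct and follows essentially the same route as the paper: dominate $\tau_e$ by the birth-death chain with rates $q_{z,z+1}=\Phi_{max}(n-z)$ and $q_{z,z-1}=z$, invoke the explicit second-moment first-passage formula of \cite{jouini2008moments} (equivalently, extend the ergodic-embedding argument of Lemma~\ref{lemma:absorption-time}), and bound the resulting sums; indeed your identity $\EE[T(1,0)^2]=2\sum_k P_k\,\EE[T(k,0)]$ is exactly the paper's quoted formula rewritten via detailed balance. The only difference is cosmetic --- you factor the double sum as $\bigl(\sum_k P_k\bigr)\cdot\max_k\EE[T(k,0)]$ while the paper bounds it term by term --- and both yield the stated bound with the same mild constant-factor slack near $\Phi_{max}=1$.
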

\begin{proof}
We again consider the evolution of the total number of infected nodes with transition rates $q_{i,i+1} \leq (n-i)\Phi_{max}$ and $q_{i+1,i} = i+1.$ The second moment for the first passage time to $0$ starting from state $1$ for a general birth and death process is derived in \cite{jouini2008moments} and is given by:
\begin{align*}
\EE[\tau_e^2] & = \frac{2}{q_{1,0}\pi_0} \sum_{i=1}^{n}\frac{1}{q_{i-1,i}\pi_{i-1}} (\sum_{k=i}^{n} \pi_k)^2	\\
& = \frac{2}{q_{1,0}\pi_0}\sum_{i=1}^{n}\frac{\pi_{i}}{q_{i,i-1}} \left(1 + \sum_{k=i}^{n-1} \prod_{l=i}^{k} \frac{q_{l,l+1}}{q_{l+1,l}}\right)^2.\\
\end{align*}
Using this equation, and $q_{i,i+1} \leq (n-i)\Phi_{max}$ and $q_{i+1,i} = i+1,$ we have
\begin{align*}
\EE[\tau_e^2] & \leq \frac{2}{n \Phi_{max}\pi_0} \sum_{i=1}^{n} \frac{\pi_0 \Phi_{max}^i \binom{n}{i}}{i} \left(\frac{\sum_{k=i}^n \Phi_{max}^k \binom{n}{k}}{\Phi_{max}^i \binom{n}{i}}\right)^2	\\
& \leq \left\{\begin{array}{lr} \frac{2}{n \Phi_{max}} \frac{(1+\Phi_{max})^{2n}}{\Phi_{max}} \quad \text{ if } \Phi_{max} > 1\\
\frac{2}{n \Phi_{max}} 2^{2n} \quad \text{ if } \Phi_{max} \leq 1 \end{array}\right.
\end{align*}
\end{proof}

\begin{theorem}
\label{appthm:sisstar}

Under the conditions of Theorem~\ref{thm:SISbs-starlike}, $\log\tau_e= O(m\log n)$ with probability at least $1-2n^{-1}-n^{\frac{-m\Phi_{max}}{2\epsilon}}$ for large enough $n.$
\end{theorem}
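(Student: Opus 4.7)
My plan is to convert the expectation bound from Theorem~\ref{thm:SISbs-starlike} into a tail bound by combining a geometric tail estimate for the number of cyclical epochs with Markov's inequality applied to the per-epoch durations. I would start by reusing the stochastic domination established in that earlier proof, $\tau_e \leq_{st} \sum_{i=1}^{V+1}(U_i^{(1)}+U_i^{(2)})$, where $V$ is Geometric with parameter $q \geq (Cn^{m\Phi_{max}/\epsilon})^{-1}$, the $U_i^{(1)}$ dominate the time until the core becomes fully susceptible (with $\EE[U_1^{(1)}] \leq (1+\Phi_{max})^m/(m\Phi_{max})$ via Lemma~\ref{lemma:sis-clique-upper-bd}), and the $U_i^{(2)}$ dominate the subsequent periphery dynamics (with $\EE[U_1^{(2)}] \leq n/\epsilon$ via Lemma~\ref{lemma:tau-sub-critical}).

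Step one is to cap the number of epochs. I would set $K_0 = \lceil \frac{m\Phi_{max}}{2\epsilon}\cdot\frac{\ln n}{q}\rceil$, so that the geometric tail gives $\PP[V > K_0] \leq (1-q)^{K_0} \leq e^{-qK_0} \leq n^{-m\Phi_{max}/(2\epsilon)}$. This directly produces the first of the three error terms in the claimed probability.

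Step two bounds the total duration on $\{V \leq K_0\}$. Note that on this event, $\sum_{i=1}^{V+1} U_i^{(k)} \leq \sum_{i=1}^{K_0+1} U_i^{(k)}$ for $k \in \{1,2\}$, so I can apply Markov's inequality to each truncated sum separately. Taking $T^{(k)} = n(K_0+1)\,\EE[U_1^{(k)}]$ yields $\PP[\sum_{i=1}^{K_0+1} U_i^{(k)} > T^{(k)}] \leq 1/n$. A union bound over $\{V > K_0\}$ and these two tail events gives $\PP[\tau_e > T^{(1)} + T^{(2)}] \leq n^{-m\Phi_{max}/(2\epsilon)} + 2/n$, matching the claimed failure probability.

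The last step is a logarithmic accounting of $T^{(1)}+T^{(2)}$. Using $1/q \leq Cn^{m\Phi_{max}/\epsilon}$ gives $\log K_0 = O(m\log n)$; meanwhile $\log\EE[U_1^{(1)}] = O(m)$ and $\log\EE[U_1^{(2)}] = O(\log n)$, so $\log(T^{(1)}+T^{(2)}) = O(m\log n)$, as required. The main ``obstacle'' here is really just bookkeeping: I need to split the per-epoch duration into the $U^{(1)}$ and $U^{(2)}$ pieces (so that the vastly different scales $(1+\Phi_{max})^m$ vs.\ $n/\epsilon$ are handled without inflating one by the other), and verify that the event inclusion $\{V \leq K_0\}\cap\{\sum_i U_i^{(k)} \leq T^{(k)}\text{ for }k=1,2\}$ genuinely implies $\tau_e \leq T^{(1)}+T^{(2)}$. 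No new analytical machinery beyond the lemmas already assembled for Theorem~\ref{thm:SISbs-starlike} is needed, and the ``large enough $n$'' hypothesis absorbs any lower-order terms in the calibration of $K_0$.
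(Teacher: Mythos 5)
Your proof is correct, and it follows the paper's overall architecture (the cyclical-epoch decomposition and the stochastic domination $\tau_e \leq_{st} \sum_{i=1}^{V+1} U_i$ inherited from Theorem~\ref{thm:SISbs-starlike}), but the tail-bound machinery is genuinely different and somewhat leaner. The paper controls the core-epoch durations by first deriving a second-moment bound for the clique extinction time (Lemma~\ref{eqn:absorp-time-sq}, via the first-passage-time variance formula of \cite{jouini2008moments}), applying Markov's inequality to $(T_{1,j}-T_{0,j})^2$ epoch by epoch, and union-bounding over the $K$ epochs; this second moment is genuinely needed there, because with only first moments a per-epoch threshold of $2n^{m\Phi_{max}/\epsilon}$ would give a per-epoch failure probability that does not beat the factor $K \approx C n^{m\Phi_{max}/\epsilon}\log n$ from the union bound. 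You sidestep this entirely by applying Markov's inequality once to the truncated sum $\sum_{i=1}^{K_0+1}U_i^{(k)}$ with threshold $n(K_0+1)\EE[U_1^{(k)}]$, which costs only a factor of $n$ in the threshold (harmless at the $O(m\log n)$ logarithmic scale) and requires no second-moment lemma and no use of the exponential tail of $\zeta_1$ beyond its mean. You also reallocate the error terms --- assigning $n^{-m\Phi_{max}/(2\epsilon)}$ to the geometric tail via your choice of $K_0$ rather than to the core-epoch sum as the paper does --- but since the final statement only adds the three contributions, this is immaterial. The event inclusion you flag ($\{V\le K_0\}$ together with the two truncated-sum bounds implying $\tau_e \le T^{(1)}+T^{(2)}$) holds by nonnegativity of the $U_i^{(k)}$, and your logarithmic accounting of $K_0$, $T^{(1)}$, and $T^{(2)}$ is accurate. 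In short: same skeleton, but your argument dispenses with Lemma~\ref{eqn:absorp-time-sq}; the price is a looser (by polynomial factors) explicit bound on $\tau_e$, which the $O(m\log n)$ statement absorbs.
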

\begin{proof}
As in the proof of Theorem~\ref{thm:SISbs-starlike}, we consider cycles of epidemic evolution. We first bound the total number of cycles and then give a bound for the duration of each cycle. As shown in Theorem~\ref{thm:SISbs-starlike}, the total number of cycles is stochastically dominated by $V,$ a geometric random variable with mean $\frac{1}{q}.$ Let $K = Cn^{\frac{m}{\epsilon} \Phi_{max}}\log n.$ Therefore, from Equation~\ref{eq:prob-bound},
\begin{align}
\label{eqn:num-cycles-bound}
\PP[V > K] = (1-q)^{K} \leq \exp(-qK) \leq \frac{1}{n}
\end{align}
We next bound the duration of the first epoch in the cycle -- the duration of the cycle in which at least one node in the core is infected. Since this time is stochastically dominated by the duration of infection in a clique of size $m$ (size of the core), we have $$\mathbb{E}[(T_{1,j}-T_{0,j})^2] \leq \frac{2(1+\Phi_{max})^{2m}}{m \Phi_{max}^2}.$$ Using Markov inequality,
\begin{align*}
\PP[T_{1,j}-T_{0,j} > 2n^{\frac{m}{\epsilon} \Phi_{max}}] & \leq \frac{\EE[(T_{1,j}-T_{0,j})^2]}{4n^{\frac{2m}{\epsilon} \Phi_{max}}}\\
& \leq \frac{(1+\Phi_{max})^{2m}}{2m \Phi_{max}^2n^{\frac{2m}{\epsilon} \Phi_{max}}}.
\end{align*}
Using union bound, the total duration of these first epochs in $K$ cycles can bounded as follows:
\begin{align}
\label{eqn:T1-bound}
\PP[\sum_{j=1}^K T_{1,j}-T_{0,j} > 2Kn^{\frac{m}{\epsilon} \Phi_{max}}] & \leq K\frac{(1+\Phi_{max})^{2m}}{2m \Phi_{max}^2n^{\frac{2m}{\epsilon} \Phi_{max}}}	\nonumber	\\
& \leq n^{\frac{-m}{2\epsilon}\Phi_{max}}
\end{align} 
for $n$ large enough.
Similarly, we upper bound the duration of the second epoch which is stochastically dominated by $\zeta_1$ defined in Section~\ref{sec:proofs}.
\begin{align*}
\PP[T_{0,j+1}-T_{1,j} > \frac{2C}{\epsilon} \log n] \leq ne^{-2C \log n}.
\end{align*}
Therefore, the total duration of the second epochs in $K$ cycles is bounded by,
\begin{align}
\label{eqn:T2-bound}
\PP[\sum_{j=1}^K T_{0,j+1}-T_{1,j} > \frac{2KC}{\epsilon} \log n] & \leq Kne^{-2C \log n}	\nonumber\\
& = \frac{C \log n}{n^C} < \frac{1}{n}
\end{align}
for large enough $n.$ Since $\tau_e = \sum_{j=1}^V T_{0,j+1}-T_{0,j} = \sum_{j=1}^V (T_{0,j+1}-T_{1,j}) + (T_{1,j}-T_{0,j}),$ using union bound we have,
\begin{align*}
\PP[\tau_e > 2K(n^{\frac{m}{\epsilon} \Phi_{max}} + C \log n)] & \leq \PP[V > K]\\
& + \PP[\sum_{j=1}^K T_{1,j}-T_{0,j} > 2Kn^{\frac{m}{\epsilon} \Phi_{max}}]\\
& + \PP[\sum_{j=1}^K T_{0,j+1}-T_{1,j} > \frac{2KC}{\epsilon} \log n].
\end{align*}
Finally, using Equations~\ref{eqn:num-cycles-bound}, \ref{eqn:T1-bound}, \ref{eqn:T2-bound}, we have for $n$ large enough,
\begin{align*}
\PP[\tau_e > 2K(n^{\frac{m}{\epsilon} \Phi_{max}} + \frac{C}{\epsilon} \log n)] \leq 2n^{-1} + n^{\frac{-m\Phi_{max}}{2\epsilon}}.
\end{align*}
Since $\log (K(n^{\frac{m}{\epsilon} \Phi_{max}} + \frac{C}{\epsilon} \log n)) =  O(m\log n),$ we have our result.
\end{proof}

\balancecolumns
\end{document}